\newcommand{\E}{\mathbb{E}}
\newcommand{\Z}{\mathbb{Z}}
\newcommand{\R}{\mathbb{R}}
\newcommand{\AHTP}{\textsf{Algorithmic Hypergraph Tur\'{a}n Problem}}
\newtheorem{theorem}{Theorem}
\newtheorem{lemma}[theorem]{Lemma}
\newtheorem{definition}[theorem]{Definition}
\newtheorem{problem}[theorem]{Problem}
\newtheorem*{T1}{Theorem~\ref{thm:tent}}
\title{\textbf{Approximate Hypergraph Vertex Cover \\ and generalized Tuza's conjecture}}
\author{Venkatesan Guruswami\thanks{{\tt venkatg@cs.cmu.edu}. Research supported in part by NSF grant CCF-1908125 and a Simons Investigator award.}  \and Sai Sandeep\thanks{{\tt spallerl@cs.cmu.edu}. Research supported in part by NSF grants CCF-1563742 and CCF-1908125.}}
\date{Computer Science Department \\ Carnegie Mellon University \\ Pittsburgh, PA 15213}
\renewcommand{\epsilon}{\varepsilon}
\begin{document}
\maketitle
\thispagestyle{empty}
\begin{abstract}

A famous conjecture of Tuza states that the minimum number of edges needed to cover all the triangles in a graph is at most twice the maximum number of edge-disjoint triangles.  This conjecture was couched in a broader setting by Aharoni and Zerbib who proposed a hypergraph version of this conjecture, and also studied its implied fractional versions. 
We establish the fractional version of the Aharoni-Zerbib conjecture up to lower order terms. Specifically, we give a factor $t/2+ O(\sqrt{t \log t})$ approximation based on LP rounding for an algorithmic version of the hypergraph Tur\'{a}n problem (AHTP). The objective in AHTP is to pick the smallest collection of $(t-1)$-sized subsets of vertices of an input $t$-uniform hypergraph such that every hyperedge contains one of these subsets.

\smallskip
Aharoni and Zerbib also posed whether Tuza's conjecture and its hypergraph versions could follow from non-trivial duality gaps between vertex covers and matchings on hypergraphs that exclude certain sub-hypergraphs, 
for instance, a ``tent" structure that cannot occur in the incidence of triangles and edges. We give a strong negative answer to this question, by exhibiting tent-free hypergraphs, and indeed $\mathcal{F}$-free hypergraphs for any finite family $\mathcal{F}$ of excluded sub-hypergraphs, whose vertex covers must include almost all the vertices.

\smallskip
The algorithmic questions arising in the above study can be phrased as instances of vertex cover on \emph{simple} hypergraphs, whose hyperedges can pairwise share at most one vertex. 
We prove that the trivial factor $t$ approximation for vertex cover is hard to improve for simple $t$-uniform hypergraphs. However, for set cover on simple $n$-vertex hypergraphs, the greedy algorithm achieves a factor $(\ln n)/2$, better than the optimal $\ln n$ factor for general hypergraphs.

\end{abstract}
\newpage
\section{Introduction}
\label{sec:intro}
The relationship between minimum vertex covers and maximum matchings of graphs and hypergraphs is a fundamental and well-studied topic in combinatorics and optimization. Even though the worst-case factor $t$ gap between the two parameters cannot be improved on arbitrary $t$-uniform hypergraphs, there are some interesting special cases where the ratio between these quantities is smaller. A classic example of this phenomenon is the K\"{o}nig's theorem on bipartite graphs, where the sizes of minimum vertex covers and maximum matchings are equal. 

For the case of $t=3$, a notorious open problem capturing this gap on special $3$-uniform hypergraphs is Tuza's conjecture\cite{Tuza81, Tuza90}, which states that in any graph, the number of edges required to hit all triangles is at most twice the maximum number of edge-disjoint triangles. 
For a hypergraph $H$, let us denote by $\tau(H)$ and $\nu(H)$ the sizes of the minimum vertex cover and maximum matching respectively.
Tuza's conjecture is then equivalent to the statement 
$\tau(H)\leq 2 \cdot \nu (H)$ for any $3$-uniform hypergraph 
$H$ obtained by taking the edges of a graph $G$ as its vertices, and the triangles of $G$ as its (hyper)-edges. (Taking $G=K_4$ shows that the factor $2$ is best possible.) The conjecture has been verified for various classes of graphs such as graphs without $K_{3,3}$-subdivision~\cite{Kri95},  graphs with maximum average degree less than $7$~\cite{Puleo15}, graphs with quadratic number of edge disjoint triangles~\cite{HaxellR01,Yuster12}, graphs with treewidth at most $6$\cite{BotlerFG19}, and random graphs in the $G_{n,p}$ model~\cite{BCD20,KP20}. On general graphs, the current best upper bound on the ratio is a factor of $2.87$ due to Haxell~\cite{Haxell99}.

Aharoni and Zerbib~\cite{AZ20} introduced an extension
of Tuza's conjecture to hypergraphs of larger uniformity. 
This generalized Tuza's conjecture states that for any $t$-uniform hypergraph $H$, the minimum vertex cover $\tau(H')$ of $H'=H^{(t-1)}$ is at most $\left\lceil\frac{t+1}{2}\right\rceil$ times that of the maximum matching $\nu(H')$. Here, for a $t$-uniform hypergraph $H=(V,E)$, the $(t-1)$-\emph{blown-up hypergraph} $H'=H^{(t-1)}$ is a $t$-uniform hypergraph whose vertices are the set of all $(t-1)$ sized subsets that are contained in at least one edge of $H$, and corresponding to every edge $e$ in $H$, all the $(t-1)$-sized subsets of $e$ form an edge in $H'$.
Tuza's conjecture is a special case of their conjecture when $t=3$ and $H$ has hyperedges corresponding to the triangles in a graph. As is the case with the original Tuza's conjecture, the conjectured value of $\left\lceil\frac{t+1}{2}\right\rceil$ is the best possible gap: when $H$ is the complete $t$-uniform hypergraph on $(t+1)$ vertices, the $(t-1)$-blown-up hypergraph $H'=H^{(t-1)}$ has $\nu(H')=1$ and $\tau(H')=\left\lceil\frac{t+1}{2}\right\rceil$.

\subsection{Fractional Tuza's conjecture and the algorithmic hypergraph Tur\'{a}n problem}
A first step towards non-trivially bounding $\tau(H)$ in terms of $\nu(H)$ for hypergraphs $H$ from some 
structured family of hypergraphs is proving its \emph{fractional version}, i.e., obtaining the same upper bound on the ratio between $\tau(H)$ and $\nu^*(H)$, the fractional maximum matching size. By LP duality, this is equivalent to bounding the ratio between $\tau(H)$ and $\tau^*(H)$, the fractional vertex cover value.
As $\nu(H)\leq \nu^*(H) = \tau^*(H)\leq \tau(H')$ for any hypergraph $H$, establishing the fractional version is a necessary step toward bounding $\tau(H)/\nu(H)$.
Note that understanding the extremal ratio between $\tau$ and $\tau^*$ on a given family of hypergraphs is equivalent to bounding the integrality gap of the natural linear programming relaxation of vertex cover on that class of hypergraphs.

Krivelevich\cite{Kri95} proved the fractional version of Tuza's conjecture that $\tau(H^{(2)})\leq 2\tau^*(H^{(2)})$ for any $3$-uniform hypergraph $H$. A multi-transversal version of Krivelevich's result is proved in a recent work~\cite{ChalermsookKSU20}.
In this work, we prove the fractional version of the generalized Tuza's conjecture (upto $o(t)$ factors), establishing a non-trivial upper bound on the LP integrality gap for $(t-1)$-blown-up hypergraphs.
\begin{theorem}
\label{thm:fractional-tuza}
For any $t$-uniform hypergraph $H$, $\tau\left(H'\right) \leq \left(\frac{t}{2}+2\sqrt{t\ln t} \right) \tau^* \left(H'\right)$, where $\tau(H')$ and $\tau^*(H')$ are respectively the size of the minimum vertex cover and minimum fractional vertex cover of the blown-up hypergraph $H'=H^{(t-1)}$.  Furthermore, there is an efficient algorithm to approximate vertex cover on $(t-1)$-blown-up hypergraphs within a $\frac{t}{2}+2\sqrt{t\ln t}$ factor.
\end{theorem}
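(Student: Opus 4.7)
The plan is an LP-rounding algorithm on the standard covering LP for $H'$. Let $x^* \in [0,1]^{V(H')}$ be an optimal fractional vertex cover with $\sum_v x^*_v = \tau^*(H')$. First I would observe the structural fact that drives everything: the hypergraph $H'$ is \emph{simple}, i.e., any two of its hyperedges share at most one vertex. Indeed, two distinct $t$-edges $e, e'$ of $H$ have $|e \cap e'| \leq t-1$, so at most one common $(t-1)$-subset can lie in both. The rounding scheme then proceeds in three phases designed to match the target $\alpha := t/2 + 2\sqrt{t \ln t}$.

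\textbf{Phase 1 (Threshold).} Include deterministically every $v$ with $x^*_v \geq 2/t$. Markov bounds this set by $(t/2)\tau^*$. Discard the hyperedges it covers, leaving a residual instance in which every remaining edge $e$ has all vertices ``light'' ($x^*_v < 2/t$) and still satisfies $\sum_{v \in e} x^*_v \geq 1$.

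\textbf{Phase 2 (Scaled randomized rounding).} Include each surviving vertex independently with probability $p_v = \min(1, \beta \, x^*_v)$, where $\beta = \sqrt{t \ln t}$. The expected size of this random set is at most $\beta \tau^*$, and any residual edge $e$ is left uncovered with probability
\[
\prod_{v \in e}(1 - \beta x^*_v) \;\leq\; \exp\!\Bigl(-\beta \sum_{v\in e} x^*_v\Bigr) \;\leq\; e^{-\beta}.
\]

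\textbf{Phase 3 (Cleanup).} For each residual edge still uncovered, add one of its vertices. To meet the target ratio we need the expected cost of this phase to be at most $O(\sqrt{t \ln t}) \, \tau^*$.

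The main obstacle, and where the simplicity of $H'$ is used, is bounding Phase 3. The ``edge uncovered'' bad events have a dependency graph of degree at most $t \Delta(H')$ since hyperedges pairwise share at most one vertex; when $\Delta(H')$ is not too large compared to $e^{\beta}$, a Lov\'asz Local Lemma argument applied with $\beta = \Theta(\sqrt{t\ln t})$ shows that no residual edge remains uncovered and Phase 3 is empty. In the complementary regime of large $\Delta(H')$, I would leverage the LP/matching duality bound $\tau^* = \nu^*(H') \geq |E(H')|/\Delta(H')$ to argue that $\tau^*$ is correspondingly large, so that the crude Markov estimate $|E(H')| \, e^{-\beta}$ on the expected cleanup still fits inside the $O(\sqrt{t \ln t})\,\tau^*$ slack. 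Stitching the three phases together yields a cover of expected size at most $(t/2 + 2\sqrt{t \ln t})\,\tau^*$, and standard derandomization (e.g., conditional expectations) produces the promised polynomial-time algorithm, giving both the integrality gap bound and the approximation guarantee in Theorem~\ref{thm:fractional-tuza}.
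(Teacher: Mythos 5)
There is a genuine gap, and it is in Phase 3. Your two regimes do not cover all cases, and the ``large $\Delta(H')$'' branch is argued backwards: the duality bound $\tau^* = \nu^*(H') \geq |E(H')|/\Delta(H')$ becomes \emph{weaker}, not stronger, as $\Delta(H')$ grows, so it gives cleanup cost at most $\Delta(H')\, e^{-\beta}\, \tau^*$, which is only within the $O(\sqrt{t\ln t})\,\tau^*$ budget when $\Delta(H')$ is \emph{small} (at most roughly $e^{\beta}\sqrt{t\ln t}$). The Lov\'asz Local Lemma branch likewise requires $\Delta(H') \lesssim e^{\beta}/t$. So both branches need small degree, and neither handles, say, $H$ the complete $t$-uniform hypergraph on $n$ vertices with $n \gg e^{\sqrt{t\ln t}}$, where $\Delta(H') = n-t+1$ and $|E(H')|e^{-\beta} \approx (n-t+1)e^{-\beta}\,\tau^*$ is unbounded relative to $\sqrt{t\ln t}\,\tau^*$. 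There is also a conceptual reason your plan cannot be repaired as stated: beyond the threshold step, your argument uses nothing about $H'$ except $t$-uniformity and (mildly) simplicity, but the paper's Theorem~\ref{thm:tent} shows that the LP integrality gap approaches $t$ even on simple hypergraphs excluding any finite family of forbidden sub-hypergraphs of blown-ups. Any proof of the $t/2+O(\sqrt{t\ln t})$ bound must therefore exploit the global structure of $H' = H^{(t-1)}$, which independent per-vertex rounding plus local cleanup does not.

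The paper's argument uses that structure in two places you are missing. First, instead of Markov on the threshold set plus independent rounding, it lets $S$ be the support of the (recursively thresholded) LP solution and bounds $|S| \leq t\cdot\textsf{OPT}$ via complementary slackness with the dual matching LP; the thresholding guarantees every surviving edge meets $S$ in at least $t' = t/2 + 2\sqrt{t\ln t}$ vertices. Second, it covers the residual instance \emph{deterministically in expectation-size only}: it 2-colors the base vertices of $H$ (not the vertices of $H'$), notes that each vertex $v_j$ of an edge of $H'$ is $e(G)\setminus\{u_j\}$ so that the parity $f(v_j) = C_1(v_j) \bmod 2$ shifts by $c(u_j)$, and concludes that whenever both colors appear in $e(G)$ among the $\geq t'$ indices with $v_j \in S$, the smaller parity class $f^{-1}(p)$ (of size $\leq |S|/2$) must hit the edge; a Chernoff bound handles the few unbalanced edges. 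This yields a cover of size $(1/2 + o(1))|S| \leq (t/2 + o(t))\textsf{OPT}$ with certainty of coverage, rather than a per-edge failure probability that must be beaten by union bound or LLL.
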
 

The vertex cover problem on $(t-1)$-blown-up hypergraphs is also intimately connected to the famous Hypergraph Tur\'{a}n Problem~\cite{Turan41,Turan61} in extremal combinatorics.
In the Hypergraph Tur\'{a}n Problem, the goal is to find the minimum size of a family $\mathcal{F}\subseteq \binom{[n]}{t-1}$ of subsets of $[n]$ with cardinality $(t-1)$ such that for every subset $S$ of $[n]$ of size $t$, there exists a set $T \in \mathcal{F}$ such that $T$ is a subset of $S$. The best known upper bound is due to~\cite{Sidorenko97}: there exists a family $\mathcal{F} \subseteq \binom{[n]}{t-1}$ of size $O(\frac{\log t}{t})\binom{n}{t-1}$ such that for every subset $S$ of $[n]$ of size $t$, there exists a subset $T\in \mathcal{F}$ such that $T$ is contained in $S$. 
On the other hand, the lower bound situation is rather dire, with only second-order improvements~\cite{chung99, LuZ09} over the trivial $\frac{1}{t}\binom{n}{t-1}$ lower bound.

Note that the Hypergraph Tur\'{a}n Problem is precisely the minimum vertex cover problem on $H^{(t-1)}$ when $H$ is the \emph{complete} $t$-uniform hypergraph.  
Thus, for a general hypergraph $H$, finding vertex covers on the blown-up hypergraph $H^{(t-1)}$ can be viewed as an algorithmic version of the Hypergraph Tur\'{a}n problem.
\begin{problem}(\AHTP \ (\textsf{AHTP}))
Given a $t$-uniform hypergraph $H=(V=[n],E)$, find the minimum size of a family $\mathcal{F}\subseteq \binom{[n]}{t-1}$ of subsets of $V$ of size $(t-1)$ such that for every hyperedge $e \in E, $ there exists $T\in \mathcal{F}$ such that $T$ is a subset of $e$.
\end{problem}
The problem is a generalization of the minimum vertex cover on graphs, which corresponds to the case $t=2$.  
As $\textsf{AHTP}$ can be cast as a vertex cover problem on $t$-uniform hypergraphs, there is a trivial factor $t$ approximation algorithm for this problem. 
We prove Theorem~\ref{thm:fractional-tuza} by obtaining an improved algorithm for $\textsf{AHTP}$ based on rounding the standard LP relaxation on $H'=H^{(t-1)}$. 

We now briefly describe this rounding approach. 
First, using threshold rounding, we argue that one may focus on the case when 
the LP solution does not have any variables that are assigned values greater than $\frac{2}{t}$. 
Let $S$ be the set of vertices of $H'$ that are assigned non-zero LP value. 
The thresholding procedure ensures that every hyperedge $e \in E(H')$ intersects with $S$ in at least $\frac{t}{2}$ vertices. 
We can bound the cardinality of $S$ from above by $t \cdot \textsf{OPT}$ using the dual matching LP, where $\textsf{OPT}$ is the cost of the optimal LP solution. Our goal then becomes finding a vertex cover of size at most about $\frac{|S|}{2}$. We achieve this by a color-coding technique: we randomly assign a color from $\{0,1\}$ to each vertex of $H$ independently. Most edges of $H$ are almost balanced under this coloring, in the sense that each color appears at least $t/2-o(t)$ times. We then use this balance property to find a small vertex cover in $H'$.


\subsection{Vertex cover vs. matching and excluded sub-hypergraphs}

The generalized Tuza's conjecture concerns the relationship between $\tau$ and $\nu$ on the $(t-1)$-blown-up hypergraphs.
There have been some works in the literature on the gap between $\tau$ and $\nu$ on other structured class of hypergraphs. 
An outstanding result of this type is Aharoni's proof~\cite{Aharoni01} that $\tau(H)\leq 2\nu(H)$ for all tripartite $3$-uniform hypergraphs $H$. 
Aharoni and Zerbib~\cite{AZ20} asked if there is a structural explanation that unites the generalized Tuza's conjecture and the above result---for example, does the exclusion of a certain substructure in the hypergraph $H$ imply better gaps between $\tau(H)$ and $\nu(H)$. A particular substructure they studied is the ``tent'' subhypergraph (\Cref{fig:tent}). %
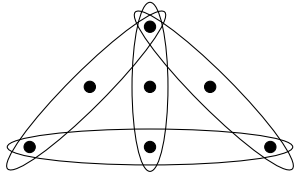
\begin{figure}
    \centering
\begin{tikzpicture}[scale=0.4]
\draw (0,0) node[fill,circle, draw, fill=black,inner sep=0,minimum size=0.15cm] {};
\draw (0,-2) node[fill,circle, draw, fill=black,inner sep=0,minimum size=0.15cm] {};
\draw (0,-4) node[fill,circle, draw, fill=black,inner sep=0,minimum size=0.15cm] {};
\draw (-4,-4) node[fill,circle, draw, fill=black,inner sep=0,minimum size=0.15cm] {};
\draw (4,-4) node[fill,circle, draw, fill=black,inner sep=0,minimum size=0.15cm] {};
\draw (-2,-2) node[fill,circle, draw, fill=black,inner sep=0,minimum size=0.15cm] {};
\draw (2,-2) node[fill,circle, draw, fill=black,inner sep=0,minimum size=0.15cm] {};

\draw[rotate=45] (-3,0)ellipse (105pt and 17pt);
\draw[rotate=135] (-3,0)ellipse (105pt and 17pt);
\draw (0,-4)ellipse (135pt and 17pt);
\draw (0,-2)ellipse (17pt and 80pt);
\end{tikzpicture}
    \caption{The $3$-tent}
    \label{fig:tent}
\end{figure}

They observed that both tripartite hypergraphs and $2$-blown-up $3$-uniform hypergraphs cannot contain the tent as a subhypergraph, and asked whether a generalization of Tuza's conjecture might hold for
$3$-uniform hypergraphs that exclude tents. If this is the case, it could give a common structural explanation of the existence of small vertex covers in $3$-uniform hypergraphs.

In this paper, we answer this question in the negative. We prove that that there are hypergraphs $H$ on $n$ vertices that exclude tents with $\tau(H) \ge (1-o(1))n$. Since $\nu(H) \le n/3$ trivially, this shows that the ratio $\tau/\nu$ can approach $3$ on tent-free $3$-uniform hypergraphs, and the extension of Tuza's conjecture as raised in \cite{AZ20} does not hold.
More generally, one might ask if there is some collection of hypergraphs which are excluded from blown-up hypergraphs whose absence implies a non-trivial gap between $\tau$ and $\nu$. 
In fact, we prove a stronger statement showing that there 
is no $3$-uniform hypergraph family $\mathcal{F}$ (that is absent from blown-up hypergraphs) whose exclusion alone could imply Tuza's conjecture. Our result applies for larger uniformity $t$ and the fractional version of Tuza's conjecture. 
\begin{theorem}
\label{thm:tent}
For every $\epsilon >0$ and every finite family of $t$-uniform hypergraphs $\mathcal{F}$ such that no hypergraph from $\mathcal{F}$ appears in any $(t-1)$-blown-up hypergraph $H'=H^{(t-1)}$, there is a $t$-uniform hypergraph $T$ such that $T$ does not contain any hypergraph from $\mathcal{F}$ but $\tau(T)\geq (t-\epsilon)\nu(T)$ (and in fact $\tau(T) \geq (t-\epsilon) \tau^*(T)$).
\end{theorem}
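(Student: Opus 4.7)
My plan is to construct $T$ by taking a random $t$-uniform hypergraph and performing alterations to eliminate copies of each $F \in \mathcal{F}$. The first step is a structural reduction on $\mathcal{F}$. I will argue that every $F \in \mathcal{F}$ has at least one connected component $F_0$ that is itself not a sub-hypergraph of any blown-up: otherwise, taking a disjoint union of the hypergraphs that realize each component would realize $F$ in the corresponding blown-up, contradicting $F \in \mathcal{F}$. Let $\mathcal{F}_0$ collect one such ``irreducible'' component from each $F \in \mathcal{F}$; it suffices to find $T$ that is $\mathcal{F}_0$-free. A routine induction on the number of edges then shows that every connected \emph{linear tree}---a connected $t$-uniform hypergraph obtained by sequentially attaching edges each sharing exactly one vertex with the current sub-hypergraph---embeds as a sub-hypergraph of some blown-up: to extend an embedding of the tree minus a leaf edge into $H_0^{(t-1)}$, attach a fresh vertex $x$ to $H_0$, add the edge $S \cup \{x\}$ to $H_0$ (where $S$ is the $(t-1)$-set representing the attachment vertex), and map the $t-1$ new vertices of the leaf edge to the $t-1$ new $(t-1)$-subsets of $S \cup \{x\}$ that involve $x$. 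Since every connected $t$-uniform hypergraph satisfies $|V(F_0)| \leq (t-1)|E(F_0)|+1$, with equality only for linear trees, every $F_0 \in \mathcal{F}_0$ obeys the strict density bound $|V(F_0)| \leq (t-1)|E(F_0)|$.

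For the construction, set $\epsilon' = \epsilon/t$ and take $T$ to be the random $t$-uniform hypergraph on $[n]$ in which each $t$-subset is a hyperedge independently with probability $p = c\, n^{-(t-1)}$, where $c = c(\epsilon, t)$ is a sufficiently large constant. Two first-moment computations drive the analysis. First, for every $S \subseteq [n]$ of size $\epsilon' n$, the expected number of hyperedges of $T$ contained in $S$ is $\Theta(c (\epsilon')^t n)$; a Chernoff bound together with a union bound over the $\binom{n}{\epsilon' n} \leq 2^{n H(\epsilon')}$ such sets shows that, once $c (\epsilon')^t$ exceeds an appropriate multiple of $H(\epsilon')$, every such $S$ contains $\Omega(n)$ hyperedges of $T$ with high probability. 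Second, for each $F_0 \in \mathcal{F}_0$, the expected number of copies of $F_0$ in $T$ is at most $n^{|V(F_0)|} p^{|E(F_0)|} = c^{|E(F_0)|} n^{|V(F_0)| - (t-1)|E(F_0)|} = O(1)$ by the density bound, and by Markov's inequality (or a Janson-type concentration result) this count is $O(1)$ with high probability. A further union bound over the finite set $\mathcal{F}_0$ produces an outcome on which both events hold.

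On such an outcome I will delete one hyperedge from each copy of each $F_0 \in \mathcal{F}_0$, producing $T^*$. Only $O(1)$ edges are removed, so every $\epsilon' n$-subset of $[n]$ still contains a hyperedge of $T^*$, giving $\alpha(T^*) < \epsilon' n$ and therefore $\tau(T^*) > (1 - \epsilon/t) n$. The uniform assignment $x_v \equiv 1/t$ is always a feasible fractional vertex cover of a $t$-uniform hypergraph, so $\tau^*(T^*) \leq n/t$, and by LP duality $\nu(T^*) \leq \nu^*(T^*) = \tau^*(T^*) \leq n/t$. Combining, $\tau(T^*) \geq (t - \epsilon)\, \tau^*(T^*) \geq (t - \epsilon)\, \nu(T^*)$, while $T^*$ is $\mathcal{F}_0$-free by construction and hence $\mathcal{F}$-free. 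The main delicate step is the structural lemma identifying linear trees as exactly the connected hypergraphs embeddable in blown-ups up to the density threshold $|V|/|E| = 1/(t-1)$; once this is established, the probabilistic argument becomes routine because the edge count inside every large set wins by a factor of $n$ over the number of forbidden copies, leaving plenty of slack for the alteration.
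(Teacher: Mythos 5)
Your proof is correct, and it splits naturally into a structural half that matches the paper and a probabilistic half that is genuinely different. The structural part is the same key lemma: both arguments characterize the connected hypergraphs embeddable in $(t-1)$-blown-ups as the ``linear trees'' (edges attachable one at a time at a single shared vertex), via the identical fresh-vertex induction, and both conclude that any forbidden hypergraph must contain a piece whose edge density strictly exceeds $\frac{1}{t-1}$. Where you diverge is in how this density is exploited. The paper encodes it in the parameter $\rho(F)=\max_{F'\subseteq F}\frac{e'-1}{v'-t}$, samples with $p=n^{-1/\rho}$, deletes a maximal edge-disjoint collection of copies, and invokes the independence-number bounds of~\cite{FM08,BFM10} to get $\alpha=\tilde O\bigl(n^{1/((t-1)\rho)}\bigr)=o(n)$. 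You instead take $p=cn^{-(t-1)}$, use the strict bound $|V(F_0)|\le (t-1)|E(F_0)|$ for a non-embeddable connected component to make the expected number of copies $O(1)$, and finish with a standard first-moment alteration plus a Chernoff/union bound showing every $\epsilon' n$-set spans $\Omega(n)$ edges. Your route is more self-contained: it needs no external independence-number machinery and no maximization over subhypergraphs, since only a first moment for copies of $F_0$ itself is required; the price is a weaker bound on $\alpha$, which is irrelevant here. Your explicit reduction to a non-embeddable connected component also cleans up the paper's somewhat terse ``without loss of generality $F_i$ is connected'' step. Two small points to make explicit in a write-up: Markov's inequality gives ``at most $K$ copies'' only with probability $1-O(1/K)$ for constant $K$, so you should intersect this constant-probability event with the high-probability edge-count event and note the intersection is nonempty; and a component with a single edge is itself a linear tree, so it is automatically excluded from $\mathcal{F}_0$ and the bound $|V(F_0)|\le(t-1)|E(F_0)|$ indeed applies to every member of $\mathcal{F}_0$.
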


The above result rules out the possibility of a ``local'' proof of Tuza's conjecture.
Our construction is a probabilistic one, first sampling each edge of the hypergraph independently with certain probability, and then removing all the copies of hypergraphs in $\mathcal{F}$. Using the fact that the family $\mathcal{F}$ satisfies certain sparsity requirements~\cite{FM08,BFM10}, we can conclude that there is no large independent set in this construction.

We also provide an \emph{explicit construction} that answers the tent-free question of \cite{AZ20}: our counterexample is the hypergraph $T$ with vertex set $[3]^n$ for large enough $n$ and edges being the set of combinatorial lines. By the density Hales Jewett Theorem~\cite{FK91,poly12}, there is no large independent set in $T$, and using the structure of combinatorial lines, we can prove that $T$ does not have any tent.

\subsection{Vertex cover and set cover on simple hypergraphs}
As mentioned earlier, \textsf{AHTP} is a special case of vertex cover on $t$-uniform hypergraphs. In fact, the blown-up hypergraph $H^{(t-1)}$ is a \textit{simple}\footnote{Simple hypergraphs are also referred to as linear hypergraphs.} hypergraph: any two edges intersect in at most one vertex.
This is simply because any two distinct $t$-sized subsets of $[n]$ intersect in at most one $(t-1)$-sized subset. 
Simple hypergraphs have been well studied in Graph Theory, especially in the context of Erd\H{o}s-Faber-Lov\'{a}sz conjecture~\cite{Erds1981, Erdos1988}, Ryser's conjecture~\cite{FHMW17} and chromatic number of bounded degree hypergraphs~\cite{DukeLR95,FriezeM13}.

A natural question is whether we can obtain an approximation ratio smaller than $t$ for vertex cover on simple hypergraphs. 
However,~\Cref{thm:tent} shows that the natural LP has an integrality gap approaching $t$ on simple, and indeed a lot more structured, hypergraphs. But perhaps there are other algorithms that beat the trivial factor $t$ approximation for this problem. 
We prove that this is not the case, and in fact, vertex cover on simple hypergraphs is as hard as vertex cover on general $t$-uniform hypergraphs. 
\begin{theorem}
\label{thm:simple}
For every $\epsilon >0$, unless $\textsf{NP}\subseteq\textsf{BPP}$, no polynomial time algorithm can approximate vertex cover on simple $t$-uniform hypergraphs within a factor of $t-1-\epsilon$. Under the Unique Games conjecture, the inapproximability factor improves to $t-\epsilon$.
\end{theorem}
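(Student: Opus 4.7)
The plan is to derive both hardness statements by starting from the known inapproximability of vertex cover on \emph{general} $t$-uniform hypergraphs---NP-hard within factor $t-1-\epsilon$ by Dinur--Guruswami--Khot--Regev and UGC-hard within factor $t-\epsilon$ by Bansal--Khot---and transferring the hardness to simple hypergraphs with only negligible loss in the approximation gap.

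The reduction I would attempt is a gap-preserving blow-up of a hard instance $H = (V, E)$ to a simple hypergraph $H^* = (V \times [L], E^*)$. Each vertex $v \in V$ is replaced by a cloud $V_v = \{v\} \times [L]$, and each hyperedge $e = \{v_1, \ldots, v_t\} \in E$ is replaced by a bundle of $L$ transversal hyperedges $e^{(i)} = \{(v_1, \sigma_{e,1}(i)), \ldots, (v_t, \sigma_{e,t}(i))\}$ for $i \in [L]$, where the permutations $\sigma_{e,j} : [L] \to [L]$ are chosen from a carefully designed source. The permutations must simultaneously achieve two goals: (i) ensure simplicity of $H^*$---any two hyperedges share at most one vertex---which rules out collisions at shared original vertices between different bundles, and (ii) preserve the \emph{integer} vertex cover value, so that $\tau(H^*) = L \cdot \tau(H) \cdot (1 \pm o(1))$. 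Simplicity within a single bundle is automatic since the transversals at different levels are vertex-disjoint by construction; across bundles the only dangerous case is $|e \cap e'| \geq 2$, and independence of the $\sigma_{e,j}$'s keeps the expected number of simplicity violations small enough to delete at $o(L|V|)$ cost to $\tau$.

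The soundness direction is the main obstacle. A naive random choice of permutations only preserves the fractional vertex cover $\tau^*(H)$, since fractional covers of $H$ can be realized integrally in $H^*$ using partial clouds, and the NO instances produced by the general $t$-uniform hardness reductions have large LP integrality gap (in the NO case of Dinur--Guruswami--Khot--Regev one has $\tau(H) \approx |V|$ while $\tau^*(H)$ can be as small as $|V|/t$), so a blow-up that only tracks $\tau^*$ destroys the gap. To force $\tau(H^*)$ to track $\tau(H)$ rather than $\tau^*(H)$, the permutations must be chosen from a structured distribution---for instance, an explicit algebraic design or a small-bias family---that yields the following ``forced integrality'' property: for any $S^* \subseteq V(H^*)$ with $\chi(v) := |S^* \cap V_v|$, if every $\chi(v_j)$ with $v_j \in e$ is bounded away from $L$, then some transversal hyperedge in $e$'s bundle is uncovered by $S^*$. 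Granting such a property, the set $C = \{v : \chi(v) \geq (1-\delta) L\}$ is a vertex cover of $H$, giving $|S^*| \geq (1-\delta) L \cdot \tau(H)$ as required. Establishing this property uniformly over all exponentially many $S^*$ without losing the gap---and thereby ruling out the ``fractional cheating'' that a random construction would allow---is the core technical hurdle; a resolution via explicit codes of suitable distance and spread properties, possibly guided by the specific long-code/label-cover structure of the hard instance $H$, seems the most promising route.

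The UGC-based factor $t-\epsilon$ hardness follows by running the same reduction on the Bansal--Khot hardness instances in place of the Dinur--Guruswami--Khot--Regev ones, with identical completeness and soundness analyses.
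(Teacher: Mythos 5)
Your high-level skeleton (cloud blow-up, random transversal edges per original hyperedge, deleting the few pairs that violate simplicity) is the same as the paper's, but your soundness argument has a genuine gap, and you have misdiagnosed where the difficulty lies. You assert that a random construction ``only preserves $\tau^*$'' and therefore that one needs structured permutations with a ``forced integrality'' property, which you then leave unestablished (``seems the most promising route''). In fact no algebraic design is needed. The key is to use the DGKR soundness in its independent-set form, exactly as stated in the paper's Theorem~\ref{thm:hypergraph-vc}: in the NO case, \emph{every} subset of measure $\epsilon'$ contains a hyperedge. This is a much stronger guarantee than ``$\tau(H)$ is large,'' and it is precisely what a random blow-up transfers. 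If $U'\subseteq V'$ has measure $\epsilon/2$, then at least $\frac{\epsilon}{4}n$ clouds are $\frac{\epsilon}{4}$-dense in $U'$; these clouds correspond to a set of measure $\ge\epsilon'$ in $V$ and hence span an edge $e\in E$; and one then only needs that some edge in $e$'s bundle lies \emph{entirely inside} $U'$. For a single uniformly random transversal this happens with probability at least $(\epsilon/4)^t$, so taking $P=maB$ independent random transversals per bundle (rather than your $L$ permutation-transversals) and union-bounding over the $m$ edges and the at most $2^{tB}$ relevant cloud-subsets gives the property with probability $\ge 9/10$ (the paper's Lemma~\ref{lem:edge}). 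Your version, with only $L$ edges per bundle, cannot survive this union bound: the failure probability per subset is roughly $(1-t^{-t})^{L}$ against $2^{tL}$ subsets, which is why you were driven toward derandomized designs.

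Two further points. First, your worry about LP integrality gaps in the NO instances is a red herring once you track the independent-set property rather than $\tau$ or $\tau^*$: the conclusion is directly that every $\epsilon$-measure subset of $V'$ contains a surviving edge, hence $\tau(H^*)\ge(1-\epsilon)|V'|$, and the gap against the completeness value $\frac{1+\epsilon}{t-1}|V'|$ is $t-1-O(\epsilon)$. Second, you still need to check that deleting the $O(m^4 t^2 a^2)$ colliding edges (to enforce simplicity) does not break soundness; the paper handles this by choosing $B$ large enough that the vertices touched by deleted edges have measure at most $\epsilon/2$, and absorbing them into the slack before locating the dense clouds. As written, your proposal does not constitute a proof; with the substitution of the independent-set soundness and the $P=\mathrm{poly}(m,B)$-sized random bundles, it becomes the paper's argument.
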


We also study the set cover problem on simple set families where any two sets in the family intersect in at most one element. Equivalently, we want to pick the minimum number of edges to cover all vertices in a simple hypergraph. Interestingly, simplicity of the set family helps in getting an improved approximation factor for the set cover---in fact, the greedy algorithm itself delivers such an approximation.

\begin{theorem}
\label{thm:simple-set-cover-intro}
For set cover on simple set systems over a universe of size $n$, the greedy algorithm achieves an approximation ratio $\frac{\ln n}{2} + 1$. Further, there are simple set systems where the greedy is off by a factor exceeding $\frac{\ln n}{2}-1$.
\end{theorem}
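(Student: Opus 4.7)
The proof plan has two parts, corresponding to the upper bound on greedy and the matching instance.

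For the upper bound, I would run the standard greedy set-cover algorithm and analyze it via a dual-fitting/charging argument. Let greedy produce sets $S_1,\dots,S_m$ with $g_i$ newly covered elements at step $i$, and charge each element $u$ the value $c(u)=1/g_{i(u)}$, where $i(u)$ is the step in which $u$ is first covered, so that $m=\sum_u c(u)$. The crucial consequence of simplicity is that every greedy step covers at most one element of any other set in the family: fixing an optimum cover $\mathcal O=\{O_1,\dots,O_{k^*}\}$ and ordering the elements of an OPT set $O$ by their covering times as $u_1,\dots,u_{|O|}$, the step $i_j$ covering $u_j$ satisfies $i_j\ge j$ and the greedy coverage satisfies $g_{i_j}\ge|O|-(j-1)$.

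The naive estimate $\sum_{u\in O}c(u)\le H_{|O|}$ gives only the classical $k^*H_n$ guarantee, so we need to exploit simplicity a second time to save a factor of two. The additional observation is that when greedy picks $S_{i_j}$, its $g_{i_j}$ newly covered elements lie in $g_{i_j}$ distinct OPT sets (since each OPT set contributes at most one element to $S_{i_j}$), so a single unit of greedy cost is amortized across $g_{i_j}$ OPT sets simultaneously. A careful pairing argument---distributing each element's charge against the ``witnesses'' in other OPT sets that lost an element at the same greedy step---then yields $\sum_{u\in O}c(u)\le \frac12 H_{|O|}+1$ for every $O\in\mathcal O$. Summing over $\mathcal O$, with a responsibility assignment for elements that lie in several OPT sets, gives $m\le k^*\bigl(\frac{\ln n}{2}+1\bigr)$. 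The principal technical obstacle is to make the pairing and responsibility accounting precise, and to handle the boundary case when greedy itself happens to pick an OPT set.

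For the matching lower bound, I would construct an explicit simple set family on $n$ elements on which greedy outputs a cover of size at least $k^*\bigl(\frac{\ln n}{2}-1\bigr)$ while an optimum cover has size $k^*$. The construction adapts the classical hierarchical bad instance for greedy set cover (sets of geometrically shrinking sizes that mislead greedy into $\Theta(\log n)$ iterations) to the simple setting. The natural way to produce many pairwise-almost-disjoint sets at multiple size scales is via an incidence-geometric construction such as lines of a projective plane or blocks of a partial Steiner system, each supplying up to $\Theta(n^2/s^2)$ pairwise-almost-disjoint sets of size $s$; these are combined across $\Theta(\log n)$ size scales alongside a small ``hidden'' optimum cover so that, at every step, the greedy preference is the current-scale misleading set rather than an OPT set. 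The main difficulty is to tune the scale sizes and the hidden optimum so that the greedy ratio matches the upper bound up to an $o(\log n)$ additive loss, which the pair-counting bound $|F|\binom{s}{2}\le\binom{n}{2}$ happens to permit at exactly the right rate.
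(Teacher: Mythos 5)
Your upper-bound plan has a genuine gap: the key lemma you posit, $\sum_{u\in O}c(u)\le \tfrac12 H_{|O|}+1$ for every optimal set $O$, is false, so no pairing argument can establish it. The paper's own tight instance shows this: take $n=k^2$, the optimum consisting of $k$ disjoint blocks of size $k$, and pairwise-disjoint ``transversal'' sets that fool greedy for $(k-1)\ln k$ steps. There every element lies in exactly one optimal set, so the total charge $(k-1)\ln k$ is split among $k$ optimal sets, giving $\sum_{u\in O}c(u)\approx \ln k = H_{|O|}-O(1)$ for a typical $O$, which exceeds $\tfrac12 H_{|O|}+1$ for large $k$. The factor-of-two saving cannot come from halving the harmonic sum over each optimal set's own size; what simplicity actually buys (and what the paper uses) is the global size bound: every set \emph{outside} an optimal cover of size $k$ meets each of the $k$ optimal sets in at most one element, hence has size at most $k$. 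The paper then splits on $k$ versus $\sqrt{n}$: if $k<\sqrt n$, after greedy's first $k$ picks all remaining sets have size at most $k$, so greedy uses at most $k+k\ln k\le k(1+\tfrac{\ln n}{2})$ sets; if $k\ge\sqrt n$, at most $\sqrt n\le k$ greedy steps cover $\ge\sqrt n$ new elements each, after which the residual maximum set size is below $\sqrt n$ and the standard $\ln|S_{\max}|$ guarantee gives $k\tfrac{\ln n}{2}$ more. If you want to salvage a charging proof, the per-set bound you need is $\tfrac12\ln n+1$ (with $n$, not $|O|$), and proving that still seems to require the size-$\le k$ observation.

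For the lower bound, your projective-plane / partial-Steiner multi-scale construction is far heavier than necessary and is left at the level of ``tune the parameters.'' The paper's instance is elementary: with $n=k^2$ and the $k$ blocks as the hidden optimum, the fooling sets are pairwise-disjoint partial transversals (one still-uncovered element from each of the currently largest blocks), so simplicity is immediate, each fooling set is at least as large as any block's residue (so greedy prefers it), the residual count decays like $(1-1/k)$ per step, and $(k-1)\ln k$ steps do not finish. I'd recommend adopting that construction rather than fighting with incidence geometry.
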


The same techniques used in proving~\Cref{thm:simple} also gives an inapproximability factor of $(\ln n)^{\Omega(1)}$ for set cover on simple set families. Interestingly, the dual Maximum Coverage problem, where the goal is to cover as many elements as possible with a specified number of sets, does not become easier on simple set systems and is hard to approximate within a factor exceeding $(1-1/e)$~\cite{CKL20a}, the factor achieved by the greedy algorithm on general set systems.
In \cite{CKL20}, the authors conjecture the hardness of achieving an approximation factor beating $(1-1/e)$ even for the Maximum Coverage version of AHTP, and call this
the 
\emph{Johnson Coverage Hypothesis}. They show that this hypothesis implies strong inapproximability results for fundamental clustering problems like $k$-means and $k$-median on Euclidean metrics. For example, they showed that the hypothesis implies that $k$-median is hard to approximate within a factor of $1.73$ on $\ell_1$ metrics, matching the best hardness factor on general metrics due to Guha and Khuller~\cite{GuhaK98}.




\subsection{Other improved hypergraph vertex cover algorithms}
Algorithms beating the trivial factor $t$  approximation have been obtained for the vertex cover problem on some other families of $t$-uniform hypergraphs. In his doctoral thesis, Lov\'{a}sz~\cite{Lov75} gave a LP rounding algorithm to obtain a factor $\frac{t}{2}$ approximation for vertex cover on $t$-uniform $t$-partite hypergraphs. This algorithm is shown to be optimal under the Unique Games Conjecture by Guruswami, Sachdeva, and Saket~\cite{GSS15}, and an almost matching NP-hardness is also shown.
Aharoni, Holzman, and Krivelevich~\cite{AHK96} generalized the above algorithmic result to other class of hypergraphs which have a partition of vertices obeying certain size restrictions. 
A factor $\frac{t}{2}$ approximation algorithm has also been obtained on subdense regular $t$-uniform hypergraphs~\cite{cardinal2012}.

For the problem of covering all paths of length $t$ ($t$-Path Transversal), Lee~\cite{Lee19} gave a factor $O(\log t)$ approximation. For covering all copies of the star on $t$ vertices, i.e., $K_{1,t-1}$, a factor $O(\log t)$ approximation is given in \cite{GL-sidma}, and this is tight by a simple reduction from dominating set on degree $t$ graphs. Covering $2$-connected $t$-vertex pattern graphs (in particular $t$-cliques or $t$-cycles) is as hard as general $t$-uniform hypergraph vertex cover~\cite{GL-sidma}.

\subsection{Open problems}
\label{subsec:open-probs}
A number of intriguing questions and directions come to light following our work, and we mention a few of them below.

The most obvious question is whether our algorithm for AHTP can be improved and yield approximation ratios smaller than $t/2$. Can stronger LP relaxations like Sherali-Adams help in this regard? On the hardness side, essentially nothing is known. There is a straightforward approximation preserving reduction from  vertex cover on graphs to AHTP, but this only shows the hardness of beating a factor of $2$. Can one show a better inapproximability factor?  We do not know any good lower bound on the integrality gap of the LP either---for example, we do not know the existence of a hypergraph $H$ for which $\tau(H^{(t-1)})/\tau^*(H^{(t-1)})$ grows with $t$. A natural candidate is the complete $t$-uniform hypergraph on $n$ vertices for which de Cain conjectured~\cite{decain} that in fact, $\tau(H^{(t-1)})/\tau^*(H^{(t-1)})$ grows with $t$. 
However, this is precisely the lower bound of hypergraph Tur\'{a}n problem, and is perhaps very hard to resolve. 
On the algorithmic side, obtaining $o(\log t)$ approximation algorithm for AHTP would lead to improvements on the hypergraph Tur\'{a}n problem: On the complete hypergraph instance, either the algorithm outputs a family $\mathcal{F}\subseteq \binom{[n]}{t-1}$ of size $o\left(\frac{\log t}{t-1}\right)\binom{n}{t}$ that covers every subset of size $t$, or gives a certificate that any such family should have size at least $\omega\left(\frac{1}{t}\right)\binom{n}{t-1}$. In the first case, we get an improvement on the upper bound of hypergraph Tur\'{a}n problem, and in the second case, we resolve de Cain's conjecture. 


Similar to the $(t-1)$-blown-up hypergraphs, one can define the $k$-blown-up hypergraph of a $t$-uniform hypergraph---which will be a ${t \choose k}$-uniform hypergraph---and study the vertex cover problem on it. A special case of this problem when $k=2$ is the analog of Tuza's problem for larger cliques, i.e., covering all copies of $t$-cliques in a graph by the fewest possible edges. Our algorithm for AHTP extends to this setting, and in particular gives an algorithm with ratio $t^2/4$ for the $k=2$ case, beating the trivial ${t \choose 2}$ factor (see Section~\ref{subset-(t,2)} for details). Can one achieve a $o(t^2)$ factor algorithm? A simple reduction from vertex cover on $t$-uniform hypergraphs shows an inapproximability factor of $t-O(1)$, but can one show hardness or integrality gaps of $\omega(t)$?

In general, our work brings to the fore challenges about covering graph structures by \emph{edges}, on both the algorithmic and hardness fronts. On the hardness side, we seem to have essentially no techniques to show strong inapproximability results, as the known PCP techniques where one naturally associates vertices with proof locations do not seem to apply.  As mentioned earlier, covering all copies of a $t$-vertex pattern graph $H$ with \emph{vertices} is as hard to approximate as general $t$-uniform hypergraph vertex cover when $H$ is $2$-connected~\cite{GL-sidma}.


Our structural results show that the LP integrality gap (and therefore also the vertex cover to matching ratio) remains close to $t$ on hypergraphs that exclude subgraphs absent in $(t-1)$-blown-up hypergraphs, and thus no ``local" proof of Tuza-type conjectures is possible. Are there interesting families of $t$-uniform hypergraphs $\mathcal{F}$ such that vertex cover admits non-trivial approximation (with ratio less than $t$) on $\mathcal{F}$-free $t$-uniform hypergraphs?

What is the optimal approximation factor one can achieve for set cover on simple set systems? Is a $o(\ln n)$ approximation possible? For simple set systems with set sizes bounded by $t$, is there an algorithm with approximation ratio $c \ln t$ for some $c < 1$? We note that the greedy algorithm itself cannot provide such a guarantee, as our tight example for greedy in Theorem~\ref{thm:simple-set-cover-intro} uses sets of size at most $\sqrt{n}$.

For the maximization version of AHTP, where we seek to pick a specified number $(t-1)$-sized subsets to cover the largest number of edges in a $t$-uniform hypergraph, is there an algorithm that beats the $(1-1/e)$ factor (achieved by greedy for the general Max Coverage problem)? The Johnson Coverage Hypothesis of \cite{CKL20} asserts that for any $\epsilon > 0$, a $(1-1/e+\epsilon)$-approximation is hard to obtain for $t$ large enough compared to $\epsilon$.

We have considered covering problems in this work, and there are interesting questions concerning the dual packing problems as well. For instance, what is the approximability of packing \emph{edge-disjoint} copies, of say $t$-cliques, in a graph? This is a special case of the matching problem on $2$-blown-up hypergraphs. 
For the maximum matching problem on general $k$-uniform hypergraphs, also known as $k$-set packing, Cygan~\cite{Cygan13} gave a local search algorithm that achieves an approximation factor of $\frac{k+1}{3}+\epsilon$ for any constant $\epsilon >0$. Can we get better algorithms for the maximum matching problem on blown-up hypergraphs? 

On the hardness front, $k$-set packing is inapproximable to a $\Omega(k/\log k)$ factor~\cite{HSS06}. Known inapproximability results for the independent set problem on graphs with maximum degree $k$~\cite{AKS11,Chan16} imply that the maximum matching problem on $k$-uniform \emph{simple} hypergraphs is hard to approximate within a $\Omega\left(\frac{k}{\log ^2 k}\right)$ factor. Could maximum matching on simple hypergraphs be easier to approximate than general hypergraphs?

\medskip\noindent\textbf{Organization of the paper.} In~\Cref{sec:prelims}, we introduce some notation and definitions. In~\Cref{sec:alg}, we describe and analyze our algorithm for AHTP and prove~\Cref{thm:fractional-tuza}.  Then, in~\Cref{sec:structure}, we prove that the analog of (generalized) Tuza's conjecture does not hold based only on local forbidden sub-hypergraph characterizations, proving Theorem~\ref{thm:tent}, and also giving an explicit construction for the tent-free case posed in \cite{AZ20}.
 Finally, in~\Cref{sec:simple}, we consider simple hypergraphs and prove Theorems~\ref{thm:simple} and \ref{thm:simple-set-cover-intro}.
 
\section{Preliminaries}
\label{sec:prelims}
\paragraph{Notation.} We use $[n]$ to denote the set $\{1,2,\ldots,n\}$. 
We use $\Z_n$ to denote the set $\{0,1,\ldots, n-1\}$.
For a set $S$ and an integer $1 \leq k \leq |S|$, we use $\binom{S}{k}$ to denote the family of all the $k$-sized subsets of $S$. 
A hypergraph $H'=(V',E')$ is called a subhypergraph of $H=(V,E)$ if $V'\subseteq V$ and $E'\subseteq E'$. 
For a hypergraph $H=(V,E)$, we use $\tau(H), \nu(H)$ to denote the size of the minimum vertex cover and the maximum matching respectively. 
Similarly, we use $\tau^*(H)$ to denote the minimum fractional vertex cover of $H$: 
\[
\tau^*(H) = \min \left\{ \sum_{v\in V}x_v : x_v \in \R_{\geq 0} \,\,\forall v \in V, \sum_{v \in e} x_v \geq 1 \,\,\forall e \in E \right\}
\]

We define the $k$-blown up hypergraph formally:  
\begin{definition}
	For a $t$-uniform hypergraph $G=(V, E)$ and for an integer $ 1 \leq k < t$, we define the $k$-blown up hypergraph $H=G^{(k)}=(V',E')$ as follows: 
	\begin{enumerate}
		\item The vertex set $V' \subseteq \binom{V}{k}$ is the set of all $k$-sized subsets of $V$ that are contained in an edge of $G$: 
		\[
			V' = \left\{ U : U \subseteq V, |U|=k, \exists e \in E : U \subseteq e\right\}
		\]
		\item For every edge $e \in E$, we include in $E'$ all the $k$-sized subsets of $e$, so that
		\[
			E' = \left\{ e' : e'= \binom{e}{k}, e \in E\right\}
		\]
	\end{enumerate}
\end{definition}
We will need the following Chernoff bound: 
\begin{lemma}(Multiplicative Chernoff bound)
	\label{thm:chernoff}
	Suppose $X_1, X_2, \ldots, X_n$ are independent random variables taking values in $\{0,1\}$. Let $X= X_1 + X_2 + \ldots + X_n$, and let $\mu = \E[X]$. Then, for any $0 \leq \delta \leq  1$, 
	\[
	\text{Pr}(X \leq (1-\delta)\mu) \leq e^{-\frac{\delta^2\mu}{2}}
	\]
\end{lemma}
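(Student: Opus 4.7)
The plan is to prove the lower-tail Chernoff bound by the standard moment generating function (MGF) technique, in four steps. First, for an auxiliary parameter $t > 0$, I would rewrite the event $\{X \leq (1-\delta)\mu\}$ as $\{e^{-tX} \geq e^{-t(1-\delta)\mu}\}$ and apply Markov's inequality to obtain $\Pr[X \leq (1-\delta)\mu] \leq e^{t(1-\delta)\mu} \cdot \E[e^{-tX}]$. The idea is that introducing $t$ gives a free parameter that can be optimized at the end to squeeze the tightest bound out of Markov's inequality.

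Second, by the independence of the $X_i$, the MGF factors as $\E[e^{-tX}] = \prod_{i=1}^n \E[e^{-tX_i}]$. For each Bernoulli $X_i$ with $p_i = \Pr[X_i = 1]$, I would compute $\E[e^{-tX_i}] = 1 + p_i(e^{-t} - 1)$ and bound this above by $\exp(p_i(e^{-t}-1))$ using $1+x \leq e^x$. Multiplying and using $\mu = \sum_i p_i$ yields $\Pr[X \leq (1-\delta)\mu] \leq \exp\bigl(\mu(e^{-t}-1) + t(1-\delta)\mu\bigr)$. The virtue of passing through $1+x \leq e^x$ is that the bound now depends on the $p_i$ only through $\mu$, so it is distribution-free beyond the mean.

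Third, I would optimize $t$. Routine differentiation shows the minimum is attained at $t = -\ln(1-\delta) > 0$, giving the (sharp) KL-style estimate $\Pr[X \leq (1-\delta)\mu] \leq \exp\bigl(-\mu\,[\delta + (1-\delta)\ln(1-\delta)]\bigr)$. This is already the tightest bound obtainable from this method; the stated lemma asks only for a cleaner quadratic weakening.

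The fourth and only nontrivial step is to weaken the KL bound into the form $e^{-\delta^2\mu/2}$, which reduces to the real-variable inequality $\delta + (1-\delta)\ln(1-\delta) \geq \delta^2/2$ on $[0,1)$. I would verify this by setting $h(\delta) := \delta + (1-\delta)\ln(1-\delta) - \delta^2/2$, checking $h(0) = h'(0) = 0$, and computing $h''(\delta) = \delta/(1-\delta) \geq 0$, whence $h$ is convex with value and derivative vanishing at $0$, so $h \geq 0$ on $[0,1)$. This calculus step is the main (but still routine) obstacle; once it is in hand, chaining it with Step~3 delivers the claimed bound.
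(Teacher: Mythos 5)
Your proof is correct: it is the standard MGF/exponential-Markov derivation of the lower-tail Chernoff bound, and all four steps (Markov on $e^{-tX}$, factoring by independence and replacing $1+x$ by $e^x$, optimizing $t=-\ln(1-\delta)$, and the calculus verification $\delta+(1-\delta)\ln(1-\delta)\geq\delta^2/2$) are carried out accurately. The paper does not actually prove this lemma --- it states it as a known textbook fact --- so there is no alternative argument to compare against. One cosmetic remark: your convexity argument establishes the key inequality on $[0,1)$, while the lemma allows $\delta=1$; that endpoint follows by continuity, or directly from $\Pr[X=0]=\prod_i(1-p_i)\leq e^{-\mu}\leq e^{-\mu/2}$, and is worth a sentence so the claimed range is fully covered.
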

\section{LP rounding algorithm for AHTP} 
\label{sec:alg}
In this section, we present our algorithm for the \textsf{AHTP} and prove Theorem~\ref{thm:fractional-tuza}.
Given a $t$-uniform hypergraph $G$ as an input to the $\textsf{AHTP}$, let $H=G^{(t-1)}$ be the $(t-1)$-blown-up hypergraph of $G$. 

\subsection{Color-coding based small vertex cover}

We first prove a lemma that in any $(t-1)$-blown-up hypergraph $H=([n],E)$, there is a vertex cover of size at most $O(\frac{\log t}{t})n$ using a color-coding argument. This lemma illustrates the color-coding idea well, and is also useful later in the context of structural characterization of the blown-up hypergraphs. This lemma is not used in the main algorithm, and the reader can skip to~\Cref{subsec:alg} for the algorithm. 
\begin{lemma}
\label{lem:color-coding}
Suppose $G=([n],E(G))$ is a $t$-uniform hypergraph and $H=G^{(t-1)}=(V(H),E(H))$. Then, there exists a randomized polynomial time algorithm that outputs a vertex cover of $H$ with expected size at most $|V|\left(\frac{2\ln t}{t}+O\left(\frac{1}{t}\right)\right)$. 
\end{lemma}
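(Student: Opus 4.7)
The plan is to reduce the problem to Sidorenko's upper bound for the Hypergraph Tur\'{a}n Problem (the special case $G = K_n^{(t)}$) via a symmetrization argument, making essential use of the ``sub-instance'' structure: $H=G^{(t-1)}$ is a sub-hypergraph of the $(t-1)$-blow-up of the complete $t$-uniform hypergraph on $[n]$.

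The first step is to obtain a family $F^* \subseteq \binom{[n]}{t-1}$ that covers every $t$-subset of $[n]$, with $|F^*| \leq \left(\frac{2\ln t}{t} + O\!\left(\tfrac{1}{t}\right)\right)\binom{n}{t-1}$. Such a family exists by Sidorenko's upper bound on $T(n,t,t-1)$, and it can be constructed in randomized polynomial time by color-coding: one colors $[n]$ uniformly at random with an appropriately chosen number $r$ of colors and includes those $(t-1)$-subsets whose color profile falls into a suitably designed family $\mathcal{P}$ of ``target multisets'' over $[r]$. The family $\mathcal{P}$ is arranged so that (i) every $t$-multiset over $[r]$ contains some $\mathcal{P}$-member as a sub-multiset (covering), and (ii) a random $(t-1)$-profile hits $\mathcal{P}$ with probability $\leq \frac{2\ln t}{t}+O(1/t)$; this is essentially the Sidorenko construction, made constructive.

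The second step is symmetrization: sample a uniformly random permutation $\pi$ of $[n]$, and output $F := \pi(F^*) \cap V(H)$. Because $\pi$ is a bijection on $[n]$, $\pi(F^*)$ still covers every $t$-subset of $[n]$; in particular it covers every edge $e\in E(G)$, so some $(t-1)$-subset $S\subset e$ lies in $\pi(F^*)$. Since $S\subset e\in E(G)$, we have $S\in V(H)$ by definition of the blow-up, and therefore $S\in F$. Thus $F$ is a valid vertex cover of $H$. For the size, linearity of expectation over $\pi$ gives, for each $S'\in \pi(F^*)$, $\Pr_\pi(S'\in V(H)) = |V(H)|/\binom{n}{t-1}$, so
\[
\E_\pi[|F|] \;=\; |F^*|\cdot \frac{|V(H)|}{\binom{n}{t-1}} \;\leq\; \left(\frac{2\ln t}{t}+O\!\left(\tfrac{1}{t}\right)\right)|V(H)|.
\]
Averaging then guarantees a permutation achieving this bound, and one may sample $O(1)$ independent permutations to boost success with high probability.

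The main obstacle is the constructive Sidorenko step: designing the target family $\mathcal{P}$ (equivalently, producing $F^*$) so that both the covering property and the density bound $\frac{2\ln t}{t}+O(1/t)$ hold simultaneously. The covering requirement over all $t$-profiles and the sparsity requirement pull in opposite directions, and making them both hold requires a careful choice of $r$ and a probabilistic argument (or a recursive construction) on $[r]^{t-1}$. Once $F^*$ is in hand, the reduction to $H=G^{(t-1)}$ via symmetrization is clean, and the rest of the analysis is a one-line expectation calculation. If desired, one can bypass the intermediate object $F^*$ and apply the coloring of $[n]$ directly: include $S\in V(H)$ in $F$ iff $\chi(S)\in \mathcal{P}$, which by the same argument gives $\E[|F|]\leq (2\ln t/t + O(1/t))|V(H)|$ and a valid vertex cover of $H$.
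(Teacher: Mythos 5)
Your outer architecture is sound: the symmetrization step (push a Tur\'{a}n covering family $F^*$ for the complete hypergraph through a random permutation and intersect with $V(H)$) is correct, the verification that the result covers $H$ is correct, and the expectation computation $\E_\pi[|F|]=|F^*|\cdot|V(H)|/\binom{n}{t-1}$ is right since $\pi(S)$ is uniform over $\binom{[n]}{t-1}$ for each fixed $S$. This reduction is not in the paper, which instead colors $[n]$ directly and argues on $H$ itself --- exactly the ``bypass'' you mention at the end. So the framing is fine.

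The genuine gap is that you have deferred precisely the step that constitutes the lemma: the explicit target family $\mathcal{P}$ with both the covering property and density $\frac{2\ln t}{t}+O(1/t)$. You name it as ``the main obstacle'' and gesture at ``a careful choice of $r$ and a probabilistic argument,'' but without exhibiting $\mathcal{P}$ the proof is not complete --- the existence statement you'd be citing from Sidorenko is only stated in the paper as $O(\tfrac{\log t}{t})\binom{n}{t-1}$, without the constant $2$ and without constructivity, both of which the lemma requires. The construction the paper uses is short and you should supply it: take $P=\bigl\lceil\frac{t-1}{2\ln t}\bigr\rceil$ colors, color $[n]$ uniformly at random, and let $\mathcal{P}$ consist of (a) all $(t-1)$-profiles missing some color, together with (b) all profiles whose weighted color count $\sum_{i} i\, C_i \bmod P$ equals a residue $p$ chosen (after the coloring) to minimize the number of vertices of $V(H)$ in that class. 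Covering holds because a $t$-set either misses a color --- in which case all $t$ of its $(t-1)$-subsets fall in (a) --- or uses all $P$ colors, in which case deleting each element shifts the weighted sum by that element's color, so the $t$ sub-profiles realize all $P$ residues and one of them lands in (b). The density is $\frac{1}{P}+P(1-\frac1P)^{t-1}\le \frac{2\ln t}{t-1}+O(\frac1t)$ by a union bound over colors. Note also that part (b) of $\mathcal{P}$ depends on the realized coloring through the choice of $p$, so a fixed-in-advance $\mathcal{P}$ with property (ii) holding pointwise, as you state it, is not quite what one gets; the correct statement is that $|f^{-1}(p)|\le |V(H)|/P$ deterministically and only the ``missing color'' part is bounded in expectation.
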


\begin{proof}
    Our algorithm is based on the color-coding technique used to get upper bounds for the hypergraph Tur\'{a}n problem~\cite{Kim1983, Sidorenko95}.
    Let $P=\left\lceil \frac{t-1}{2\ln t} \right\rceil$. Color each vertex of $G$ with $c:[n]\rightarrow [P]$ uniformly independently at random.
    For $v \in V(H)$ and $i \in [P]$, let $C_i(v)$ denote the number of nodes of $v$ that are colored with $i$, i.e.,
    $C_i(v) := \left| \{j \in v : c(j)=i \} \right|$.
    
    We define a function $f: V(H)\rightarrow \Z_P$ as 
    \[
        f(v) =  C_1(v) + 2 C_2(v) + \ldots + (P-1) C_{(P-1)}(v)  \mod P
    \]
    For an element $i \in \Z_P$, let $f^{-1}(i)$ denote the set $\{ v \in V(H) : f(v)=i\}$. 
    Let $p \in \Z_P$ be such that $|f^{-1}(p)| \leq |f^{-1}(i)|$ for all $i \in \Z_P$. 
    Note that by definition, $|f^{-1}(p)|\leq \frac{|V|}{P}$.
    Let $U \subseteq V(H)$ be defined as follows: 
    \[
    U = \{ v : v \in V(H), \exists i \in [P] \text{ such that }C_i(v)=0 \}
    \]
    
    We claim that $S = f^{-1}(p) \cup U$ is a vertex cover of $H$.
    Consider an arbitrary edge $e = \{ v_1, v_2, \ldots, v_t \} \in E(H)$. Let the corresponding edge in $G$ be equal to $e(G) = \bigcup_{j \in [t]}v_j =(u_1, u_2, \ldots, u_t) \in E(G)$ where $u_1, u_2, \ldots, u_t$ are elements of $[n]$.
    Without loss of generality, let $v_j = e(G) \setminus \{ u_j \} $.
    For a color $i \in [P]$, let $C_i(e) = \left| \{ j \in [t] : c(u_j)=i \} \right| $. 
    We consider two cases separately: 
    \begin{enumerate}
        \item First, if there exists a color $i \in [P]$ such that $C_i(e)=0$, then for every $j \in [t]$, $C_i (v_j)=0$, and thus, for every $j \in [t]$, $v_j\subseteq U$, and thus, $e \cap S \neq \phi.$ 
        \item Suppose that for every color $i \in [P]$, $C_i(e)>0$. We define $f(e)\in \Z_P$ as 
        \[
        f(e) = C_1(e) + 2 C_2(e) + \ldots + (P-1) C_{(P-1)}(e)  \mod P
        \]
        Note that for every $j \in [t]$, we have
        \[
        f(v_j) = f(e)-c(u_j) \mod P
        \]
        As the size of $\{ c(u_1), c(u_2), \ldots, c(u_t) \}$ is equal to $P$, the size of the set $\{ f(v_1), f(v_2), \ldots, f(v_t)\}$ is equal to $P$ as well. Thus, there exists a $j \in [t]$ such that $f(v_j)=p$ which implies that $v_j \in S$. 
    \end{enumerate}
    
    Thus, our goal is to upper bound the expected value of $|S|$. 
    Note that $P \leq \frac{t-1}{\ln t}$.
    By taking union bound over all the colors, we get 
    \[  \E[U] \leq P \left( 1 - \frac{1}{P}\right) ^{t-1}|V| \leq \frac{t-1}{\ln t}e^{-2\ln t}|V|     \leq \left( \frac{1}{t \ln t} \right) |V| \leq O\left( \frac{1}{t} \right) |V|
    \]
    Thus, the expected value of $S$ is at most $|f^{-1}(p)|+\E[|U|]$ which is at most $\left(\frac{2\ln t}{t-1}+O\left(\frac{1}{t}\right)\right)|V|.$ \end{proof}

\subsection{LP rounding based algorithm for \textsf{AHTP}} 

\label{subsec:alg}
Consider the standard LP relaxation for vertex cover in $H$: 
\begin{align*}
\text{Minimize }\sum_{v \in V(H)} &x_v  \\ 
\text{such that}\quad \sum_{ v \in e } x_v & \geq 1 \,\, \forall e \in E(H)  \\ 
x_v &\geq 0 \, \, \forall v \in V(H)
\end{align*}
Let $\overline{x}$ be an optimal solution to the above Linear Program, and let $\textsf{OPT}=\sum_{v \in V(H)}\overline{x}_v$. 
Let $S \subseteq V(H)$ be the set of vertices that are assigned positive LP value i.e.
\[
S = \{ v \in V(H) : \overline{x}_v > 0 \}
\]
We need a lemma relating $|S|$ and $\textsf{OPT}$: 
\begin{lemma}
	\label{lem:dual}
	The cardinality of $S$ is at most $t\cdot \textsf{OPT}$. 
\end{lemma}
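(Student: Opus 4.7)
The plan is to prove the lemma by a direct LP duality argument, using complementary slackness between the primal vertex-cover LP and its dual, which is the fractional matching LP on $H$:
\begin{align*}
\text{Maximize} \quad & \sum_{e \in E(H)} y_e \\
\text{such that} \quad & \sum_{e \ni v} y_e \leq 1 \quad \forall v \in V(H), \\
& y_e \geq 0 \quad \forall e \in E(H).
\end{align*}
I will let $\overline{y}$ denote an optimal dual solution, so by strong LP duality $\sum_{e \in E(H)} \overline{y}_e = \textsf{OPT}$.

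The key observation is the complementary slackness condition: for every $v \in V(H)$ with $\overline{x}_v > 0$, the corresponding dual inequality is tight, i.e.\ $\sum_{e \ni v} \overline{y}_e = 1$. Since $S$ is precisely the support of $\overline{x}$, I get
\[
|S| \;=\; \sum_{v \in S} 1 \;=\; \sum_{v \in S} \sum_{e \ni v} \overline{y}_e \;=\; \sum_{e \in E(H)} \overline{y}_e \cdot |e \cap S|.
\]
Here the last equality just reverses the order of summation. Now I use the fact that $H = G^{(t-1)}$ is $t$-uniform, so $|e \cap S| \leq |e| = t$ for every hyperedge $e$. Plugging this in yields
\[
|S| \;\leq\; t \cdot \sum_{e \in E(H)} \overline{y}_e \;=\; t \cdot \textsf{OPT},
\]
which is the desired bound.

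There is really no hard step here; the argument is a one-line consequence of complementary slackness together with the uniformity of $H$. The only thing to be mindful of in the write-up is making sure the $t$-uniformity of the blown-up hypergraph $H = G^{(t-1)}$ is invoked explicitly (every edge of $H$ is the collection of $t$ many $(t-1)$-subsets of a single edge of $G$, hence has exactly $t$ vertices). If one prefers to avoid an explicit dual, the same argument can be phrased as: the indicator vector $\mathbf{1}_S / t$ is a feasible fractional vertex cover of $H$, since each edge of $H$ that meets $S$ at all must intersect $S$ in at least one vertex, and dividing by $t$ would be too weak; the honest argument really does need the dual optimal $\overline{y}$ and complementary slackness to pin each $v \in S$ to a tight dual constraint.
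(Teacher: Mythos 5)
Your proof is correct and is essentially identical to the paper's: both invoke the dual fractional matching LP, use complementary slackness to get $\sum_{e \ni v} \overline{y}_e = 1$ for every $v$ in the support $S$, and then bound $\sum_{v\in S}\sum_{e\ni v}\overline{y}_e \leq t\sum_e \overline{y}_e = t\cdot\textsf{OPT}$ using the $t$-uniformity of $H$. No differences worth noting.
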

\begin{proof}
	Consider the dual of the vertex cover LP: 
	\begin{align*}
	\text{Maximize } \sum_{e \in E(H)} & y(e) \\ 
	\text{such that } \sum_{e \ni v} y(e) &\leq 1 \,\, \forall v \in V(H) \\ 
	y(e) &\geq 0 \, \, \forall e \in E(H)
	\end{align*}
Let $\overline{y}$ be an optimal solution to the above matching LP. By LP-duality, we get $\sum_{e \in E(H)} \overline{y}_e = \textsf{OPT}$. 
Recall that for all $v \in S$, $\overline{x}_v \neq 0$. By the complementary slackness conditions, we get that for all $v \in S$, $\sum_{e \ni v} \overline{y}_e = 1$. Summing over all $v \in S$, we obtain 
\[
|S| = \sum_{v \in S} \sum_{e \ni v } \overline{y}_e \leq t \sum_{e \in E(H)} \overline{y}_e = t\cdot \textsf{OPT}. \qedhere
\] 
\end{proof}

In general, $\textsf{OPT}$ could be much smaller than $|V(H)|$, and thus we cannot use~\Cref{lem:color-coding} directly to obtain algorithm for \textsf{AHTP}.
However, we can obtain a simple $(t-1)$-factor approximation algorithm for \textsf{AHTP} using~\Cref{lem:color-coding}, extending the proof of fractional Tuza's conjecture of Krivelevich~\cite{Kri95}. We consider two different cases: 
\begin{enumerate}
    \item Suppose that there is a vertex $v \in V(H)$ such that $\overline{x}_v =0$. Consider an arbitrary edge $e \in E(H)$ with $v \in e$. As $\sum_{u\in e}\overline{x}_{u} \geq 1$, we can infer that there is a vertex $v' \in e$ such that $\overline{x}_{v'}\geq \frac{1}{t-1}$. We round $v'$ to $1$ i.e. add $v'$ to our vertex cover solution, delete all the edges containing $v'$ and recursively proceed.
    \item Suppose that for every vertex $v \in V(H)$, we have $\overline{x}_v >0$. In this case, using~\Cref{lem:color-coding}, we can find a vertex cover of size $O(\frac{\log t}{t})|V(H)|$, which can be bounded above by $O(\log t)\textsf{OPT}$ using~\Cref{lem:dual}.
\end{enumerate}
We now describe a randomized algorithm to round the LP to obtain an integral solution whose expected size is at most $ \left( \frac{t}{2} +2\sqrt{t\ln t}\right)\textsf{OPT}$. As is evident from the second case in the above $(t-1)$-factor algorithm, the problem is easy when the set of vertices $S \subseteq V(H)$ with non-zero LP value is large. Instead of considering the two different cases based on whether $S=V(H)$ or not, we take a more direct approach by finding a vertex cover of size $\left( \frac 12+o(1)\right)|S|$. Combined with~\Cref{lem:dual}, we get our required approximation guarantee. 

For ease of notation, let $t'=\frac{t}{2} +2\sqrt{t\ln t}$. Our first step is to round all the variables above a certain threshold to $1$ (\Cref{alg:thresholding}). However, we need to do it recursively to ensure that we can bound the optimal value of the remaining instance. 
\begin{algorithm}[!ht]
\caption{Recursive thresholding for \textsf{AHTP}}

\begin{algorithmic}[1]
    \label{alg:thresholding}
    \State Let $\gamma = \frac{1}{t'}$.
    \State Let $\overline{x}$ be an optimal solution of the LP and let $V' = \left\{v:\overline{x}_v \geq \gamma \right\}$.
    \State Let $U=V'$. 
    \While {$V'$ is non-empty}
    \State Delete $V'$ from $V(H)$, and delete all the edges $e \in E(H)$ that contain at least one vertex $v \in V'$. 
    \State Solve the LP with updated $H$. Update $\overline{x}$ to be the new LP solution. \label{step:middle}
    \State Update $V' = \left\{v \in V(H):\overline{x}_v \geq \gamma \right\}$. Update $U \leftarrow U \cup V'$.
    \EndWhile
    \State Output $U$ and the updated $H$.
    \end{algorithmic}
\end{algorithm}

Let the final updated hypergraph $H$ when~\Cref{alg:thresholding} terminates be denoted by $H'$. Let the optimal cost of the solution $\overline{x}$ for the vertex cover on $H'$ be denoted by $\textsf{OPT}'$. 
We prove that the size of the vertex cover output by the algorithm is not too large:
\begin{lemma}
	\label{lem:rounding-high}
	When the above recursive thresholding algorithm (\Cref{alg:thresholding}) terminates, we have $|U| \leq  t'\cdot \left( \textsf{OPT}-\textsf{OPT}'\right)$. 
\end{lemma}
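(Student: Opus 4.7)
The plan is to track the LP optimum across iterations of the recursive thresholding loop and bound the drop in this value per iteration in terms of the number of vertices added to $U$.

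For iteration $i$ of the while loop, let $\textsf{OPT}_i$ denote the LP optimum of the current hypergraph at the start of the iteration (so $\textsf{OPT}_0 = \textsf{OPT}$), let $\overline{x}^{(i)}$ be the optimal LP solution, and let $V'_i = \{v : \overline{x}^{(i)}_v \ge \gamma\}$ be the set moved into $U$. The loop terminates when $V'_i = \emptyset$, at which point the LP value equals $\textsf{OPT}'$. The core step is to show
\[
\textsf{OPT}_{i+1} \le \textsf{OPT}_i - \gamma \,|V'_i|,
\]
which immediately rearranges to $|V'_i| \le t' \bigl(\textsf{OPT}_i - \textsf{OPT}_{i+1}\bigr)$.

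To prove this inequality, I would exhibit a feasible LP solution for the updated hypergraph that witnesses it. Take the restriction of $\overline{x}^{(i)}$ to the vertex set of the updated hypergraph, i.e., to $V(H) \setminus V'_i$. Since the algorithm deletes every edge that contains at least one vertex of $V'_i$, every surviving hyperedge $e$ lies entirely in $V(H) \setminus V'_i$, so the coverage constraint $\sum_{v \in e} \overline{x}^{(i)}_v \ge 1$ is inherited unchanged. Hence the restriction is feasible for the new LP, and its objective value equals
\[
\textsf{OPT}_i \;-\; \sum_{v \in V'_i} \overline{x}^{(i)}_v \;\le\; \textsf{OPT}_i - \gamma\,|V'_i|,
\]
using the defining threshold $\overline{x}^{(i)}_v \ge \gamma$ for $v \in V'_i$. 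Thus $\textsf{OPT}_{i+1}$, the true optimum of the new LP, is at most this quantity.

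Summing the telescoping bound $|V'_i| \le t'(\textsf{OPT}_i - \textsf{OPT}_{i+1})$ across all iterations gives
\[
|U| = \sum_i |V'_i| \le t'\bigl(\textsf{OPT}_0 - \textsf{OPT}_{\text{final}}\bigr) = t'\bigl(\textsf{OPT} - \textsf{OPT}'\bigr),
\]
as required. There is no real obstacle here; the only point that needs care is verifying that the restriction of the LP solution remains feasible, which is exactly why the algorithm deletes all edges incident to $V'_i$ rather than merely the vertices.
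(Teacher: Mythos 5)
Your proof is correct and matches the paper's argument: both rely on the observation that the restriction of the old optimal LP solution to the reduced hypergraph remains feasible (since all edges touching $V'$ are deleted), so the LP value drops by at least $\gamma\,|V'| = |V'|/t'$ per iteration, and then sum (you telescope, the paper phrases it as induction). No gaps.
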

\begin{proof}
	We will inductively prove the following: after line~\ref{step:middle} in the while loop of the algorithm, $|U|\leq  t'\cdot \left( \textsf{OPT} - \textsf{OPT}_{new}\right)$ where $\textsf{OPT}_{new}$ is the cost of the current optimal solution $\overline{x}$. Let $\overline{x}'$ is the optimal solution before deleting $V'$ from $H$. Let $\textsf{OPT}_{old}$ be the cost of the solution $\overline{x}'$. 
	By inductive hypothesis, we have $|U|-|V'| \leq  t'\cdot \left( \textsf{OPT} -\textsf{OPT}_{old}\right)$.  
	
	We claim that $|V'| \leq  t'\cdot \left( \textsf{OPT}_{old} -\textsf{OPT}_{new}\right)$.
	As $\overline{x}$ is an optimal vertex cover of $H$, we have that $\overline{x}'$ restricted to $H$ has cost at least $\textsf{OPT}_{new}$. 
	This implies that $\sum_{v \in V'} \overline{x}_v' \geq \textsf{OPT}_{old}-\textsf{OPT}_{new}$. 
	As each $\overline{x}_v', v \in V'$ is at least $\frac{1}{t'}$, we obtain the required claim. 
\end{proof}


We are now ready to state our main algorithm for the \textsf{AHTP}. The input to the algorithm is a $t$-uniform hypergraph $G$, and the output is a vertex cover for the hypergraph $H=G^{(t-1)}$.
\begin{algorithm}
\caption{Main algorithm}

\begin{algorithmic}[1]
\label{alg:main}
    \State Apply~\Cref{alg:thresholding} to obtain $U$ and let $H'=(V(H'),E(H')) $ be the updated $H$. 
    Let $\overline{x}$ be an optimal solution of the vertex cover LP on $H'$ with $\overline{x}_v \leq \gamma$ for all $v \in V(H')$.
    \State Let $S \subseteq V(H')$ be defined as $S = \{ v : V(H') : \overline{x}_v > 0\}$. 
    \State Let $\delta = \sqrt{\frac{4\ln t}{t-1}}$.
    \State Color the vertices $[n]$ of $G$ using $c:[n]\rightarrow \{0,1\}$ uniformly and independently at random. 
    \State For a vertex $v \in S$ and a color $i \in \{0,1\}$, let $C_i(v)$ denote the number of nodes that are colored with the color $i$ i.e. \Comment{Recall that $S\subseteq V(H')\subseteq \binom{[n]}{t-1}$.} 
    \[
    C_i(v)= | \{ j \in v : c(j)= i \}|
    \] 
    \State Let $S' \subseteq S$ be defined as the set of vertices in $S$ where the discrepancy between two colors is high:
    \[
    S' = \left\{ v \in S : \exists i \in \{0,1\}: C_i(v) \leq (1-\delta)\frac{t-1}{2} \right\} 
    \]
    \State We now define a function $f:  S \rightarrow \{0,1\}$ as $f(v)=C_1(v) \mod 2$.
    \State For  $i \in \{0,1\}$, let $f^{-1}(i)$ denote the set of all the vertices $v \in S$ such that $f(v)=i$.
    \State Let $p \in \{0,1\}$ be such that $|f^{-1}(p)|\leq |f^{-1}(1-p)|$.
    \State Let $T \subseteq S$ be defined as $T=S' \cup f^{-1}(p)$.
    \State Output $T\cup U$. 
    \end{algorithmic}
\end{algorithm}  

\subsection{Analysis of the algorithm and proof of Theorem~\ref{thm:fractional-tuza}}

We will first prove that~\Cref{alg:main}  indeed outputs a valid vertex cover of $H$. 
\begin{lemma}
\label{lem:vertex-cover}
	$T\cup U$ is a vertex cover of $H$.
\end{lemma}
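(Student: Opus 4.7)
My plan is to verify that every $e \in E(H)$ is covered by $T \cup U$, splitting on whether $e$ survives Algorithm~\ref{alg:thresholding}. If $e$ was deleted at some iteration, then a vertex of $e$ must have been added to $U$ at that moment, so $U$ already covers $e$. The substantive case is thus $e \in E(H')$, and I will focus only on this.

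Write $e = \{v_1, \dots, v_t\}$ with $v_j = e(G)\setminus \{u_j\}$ and $e(G) = \{u_1, \dots, u_t\} \in E(G)$, and set $a = |\{j : c(u_j)=0\}|$ and $b = t - a$. A short direct calculation shows $C_1(v_j) = b$ and $f(v_j) = b \bmod 2$ when $c(u_j)=0$, while $C_1(v_j) = b - 1$ and $f(v_j) = (b-1)\bmod 2$ when $c(u_j)=1$; in particular, if both colors appear on $\{u_1,\dots,u_t\}$ then both values of $f$ are realized on $\{v_1,\dots,v_t\}$. Since $\overline{x}$ on $H'$ is capped by $\gamma = 1/t'$ but still satisfies $\sum_{v \in e}\overline{x}_v \ge 1$, the index set $J := \{j \in [t] : v_j \in S\}$ must satisfy $|J| \ge t' = t/2 + 2\sqrt{t\ln t}$.

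Next I split on the color pattern of $\{u_1,\dots,u_t\}$. If only one color is used, then every $v_j$ has $C_0(v_j)=0$ or $C_1(v_j)=0$, so every $v_j \in S$ already lies in $S' \subseteq T$; nonemptiness of $J$ then gives a covering vertex. If both colors appear, I argue by contradiction: assuming no $v_j$ with $j \in J$ lies in $T$ gives both (i) $v_j \notin S'$ for all $j \in J$ and (ii) $f(v_j) \ne p$ for all $j \in J$. Because the $f$-value of $v_j$ depends only on $c(u_j)$, (ii) forces all $u_j$ with $j \in J$ to share a single color, WLOG color $0$; hence $a \ge |J| \ge t/2 + 2\sqrt{t\ln t}$. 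Applying (i) to any such $j$ yields $b = C_1(v_j) > (1-\delta)(t-1)/2 = (t-1)/2 - \sqrt{(t-1)\ln t}$, so $a = t-b < (t+1)/2 + \sqrt{(t-1)\ln t}$.

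The step I expect to be the main obstacle is the resulting quantitative showdown: the two bounds on $a$ force the inequality $2\sqrt{t\ln t} - \sqrt{(t-1)\ln t} < 1/2$. For large $t$ the left side grows like $\sqrt{t\ln t}$ and the contradiction is immediate, while for small $t$ one has $t' \ge t-1$ and Theorem~\ref{thm:fractional-tuza} is already weaker than the trivial factor-$t$ approximation, so attention may be restricted to $t$ above a small absolute constant. This is precisely the calibration that fixes the choices $\gamma = 1/t'$ and $\delta = \sqrt{4\ln t/(t-1)}$, and it explains the shape of the $t/2 + 2\sqrt{t\ln t}$ ratio; everything else (the bookkeeping around the thresholding loop and the case analysis above) is routine.
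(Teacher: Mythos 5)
Your proof is correct and follows essentially the same route as the paper's: reduce to covering $E(H')$, use $|e\cap S|\ge t'$ from the $\gamma$-cap, and observe that either some $v_j\in S$ lands in $S'$ or the parity function $f$ takes both values on $\{v_j : j\in J\}$, with the decisive step being the same calibration $t' + (1-\delta)\tfrac{t-1}{2} > t$, i.e.\ $2\sqrt{t\ln t}-\sqrt{(t-1)\ln t}>\tfrac12$ (which in fact holds for all $t\ge 2$, so your hedge about small $t$ is unnecessary). The only cosmetic difference is that you argue by contradiction and split on whether one or two colors appear, whereas the paper splits on whether some color appears at most $(1-\delta)\tfrac{t-1}{2}$ times.
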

\begin{proof}
    It suffices to prove that $T$ is a vertex cover of $H'$.
    
	Consider an arbitrary edge $e = (v_1, v_2, \ldots, v_t) \in E(H')$ corresponding to the edge $e(G)= \cup_{j \in [t]}v_j =\{u_1, u_2, \ldots, u_t \} \in E(G)$.  
    Since $\overline{x}_v \leq \gamma$ for all $v \in V(H')$, we can deduce that $ |e \cap S| \geq \frac{1}{\gamma} =t' $.
    
    Our goal is to show that there exists $j \in [t]$ such that $v_j \in T$.
	We consider two separate cases:	\begin{enumerate}
	\item If there is a color $i \in \{0,1\}$ such that there are at most $(1-\delta)\frac{t-1}{2}$ nodes of color $i$ in $e(G)$, then for all $j \in [t], C_i(v_j) \leq (1-\delta)\frac{t-1}{2}$.
	Since $e \cap S$ is non-empty, there exists $j \in [t]$ such that $v_j \in S$. By definition of $S'$, this implies that $v_j \in S'$ as well, and thus $e \cap T \neq \phi$.  

	\item Suppose that in the coloring $c$, both the colors $0,1$ occur at least $(1-\delta)\frac{t-1}{2}$ times in $e$. 
	Let $e' = e \cap S$ and let $k=|e'| \geq t'$. Without loss of generality, let $e' = \{v_1, v_2, \ldots, v_{k}\}$. For every $j \in [k]$, let $v_j = e(G) \setminus \{ u_j\}$ for $u_j \in [n]$. 
    First, we claim that $t-k < (1-\delta)\frac{t-1}{2}$. We have 
    \begin{align*}
        t-k - (1-\delta)\frac{t-1}{2} &\leq t-t'-(1-\delta)\frac{t-1}{2} \\ 
        &= \frac{t}{2}-2\sqrt{t\ln t} - \left(1-\sqrt{\frac{4\ln t}{t-1}}\right) \frac{t-1}{2} \\ 
        &= \frac{1}{2} \left( t - 4\sqrt{t\ln t} -(t-1)+2\sqrt{(t-1)\ln t} \right) \\
        &\leq \frac{1}{2} \left( 1 - 2 \sqrt{t \ln t} \right) < 0
    \end{align*}
	Since each color occurs at least $(1-\delta) \frac{t-1}{2}$ times in $e(G)$, using the above, we can infer that 
	\[
	\left|\{c(u_1), c(u_2), \ldots, c(u_k)\}\right| \geq 2.
	\]
	We define the value $f(e)$ in the same fashion as we have defined $f(v)$ for $v\in S$: For $i \in \{0,1\}$, let $C_i(e)$ denote the number of nodes $j\in [t]$ such that $c(u_j)=i$, and let $f(e)=C_1(e) \mod 2$. 
	 Using this definition, we get 
	 \[ 
	 f(v_j) = f(e)-c(u_j) \mod 2 \,\,\forall j \in [k]. 
	 \]
	 As $\{c(u_1), c(u_2), \ldots, c(u_k)\}=\{0,1\}$, we have
	 $\{ f(v_1), f(v_2), \ldots, f(v_k)\}=\{0,1\}$ as well. Thus, there exists $j \in [k]$ such that $f(v_j)=p$, which proves that $v_j \in f^{-1}(p) \subseteq T$.\qedhere 
	\end{enumerate}  
\end{proof}

Note that the expected number of nodes of each color $i \in \{0,1\}$ in a vertex $v=(u_1, u_2, \ldots, u_{t-1}) \in S$ is equal to $\frac{t-1}{2}$. 
The set $S'$ is the set of vertices of $S$ where there is a color that occurs much fewer than its expected value. 
We prove that this happens with low probability:
\begin{lemma}
\label{lem:expected}
	The expected cardinality of $S'$ is at most $\frac{2}{t}|S|$.
\end{lemma}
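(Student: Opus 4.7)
The plan is a direct application of the multiplicative Chernoff bound followed by a union bound and linearity of expectation.

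First I would fix an arbitrary vertex $v \in S$. Since $v$ is a $(t-1)$-sized subset of $[n]$ and each element of $[n]$ is colored $0$ or $1$ uniformly and independently, the quantity $C_0(v)$ is a sum of $t-1$ independent Bernoulli$(1/2)$ random variables, so $\mu := \mathbb{E}[C_0(v)] = (t-1)/2$, and similarly for $C_1(v)$.

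Next I would invoke the multiplicative Chernoff bound (Lemma~\ref{thm:chernoff}) with the chosen parameter $\delta = \sqrt{4\ln t/(t-1)}$. Plugging into the bound gives
\[
\Pr\!\left[C_i(v) \leq (1-\delta)\frac{t-1}{2}\right] \leq \exp\!\left(-\frac{\delta^2 \mu}{2}\right) = \exp\!\left(-\frac{1}{2}\cdot\frac{4\ln t}{t-1}\cdot\frac{t-1}{2}\right) = \exp(-\ln t) = \frac{1}{t}
\]
for each $i\in\{0,1\}$. (One should also verify $\delta\le 1$, which holds whenever $t-1 \geq 4\ln t$, i.e. for all but trivially small $t$; the small cases can be handled by choosing the all-vertex cover since the claimed bound is vacuous there.)

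Then a union bound over the two colors gives $\Pr[v \in S'] \leq 2/t$. Finally, linearity of expectation yields
\[
\mathbb{E}[|S'|] = \sum_{v \in S} \Pr[v \in S'] \leq \frac{2}{t}\,|S|,
\]
which is the desired bound. There is no real obstacle here; the estimate is by design, since $\delta$ was tuned exactly so that the Chernoff exponent equals $\ln t$, matching the $2/t$ target.
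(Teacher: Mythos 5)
Your proof is correct and follows exactly the paper's argument: fix $v\in S$, apply the multiplicative Chernoff bound with $\mu=(t-1)/2$ and $\delta=\sqrt{4\ln t/(t-1)}$ to get a $1/t$ tail for each color, union bound over the two colors, and sum over $S$ by linearity of expectation. Your aside checking that $\delta\le 1$ (needed to invoke Lemma~\ref{thm:chernoff}) is a small point the paper glosses over, but otherwise the two proofs are identical.
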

\begin{proof}
	Let $v=(u_1,u_2, \ldots, u_{t-1}) \in S$ be an arbitrary vertex in $S$, where $u_1, u_2, \ldots, u_{t-1}$ are elements of $[n]$. For a color $i \in \{0,1\}$, let the random variable $X(i)$ denote to the number of nodes $j \in [t-1]$ such that $c(u_j)=i$. We can write $X(i) = \sum_{j \in [t-1]} X(i,j)$, where $X(i,j)$ is the indicator random variable of the event that $c(u_j)=i$.  
	We have $\mu = \E[X(i)]=\frac{t-1}{2}$. 
	Using multiplicative Chernoff bound (\Cref{thm:chernoff}), we can upper bound the probability that $X(i) \leq (1-\delta)\frac{t-1}{2}$ by 
	\[
	\text{Pr}\left(X(i) \leq (1-\delta)\frac{t-1}{2}\right) \leq e^{-\frac{\delta^2(t-1)}{4} }
	\] 
For the choice $\delta = \sqrt{\frac{4\ln t}{t-1}}$, the above probability is at most $\frac{1}{t}$. By applying union bound over the two colors and adding the expectation over all the vertices in $S$, we obtain the lemma. 
\end{proof}

Finally, we bound the expected size of the output of the algorithm: 
\begin{lemma}
\label{lem:expectation-output}
    The expected cardinality of $T\cup U$ is at most $ \left(\frac{t}{2}+2\sqrt{t\ln t}\right) \cdot \textsf{OPT}$. 
\end{lemma}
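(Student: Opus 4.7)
The plan is to combine the three bounds already in hand: the recursive thresholding guarantee from Lemma~\ref{lem:rounding-high}, the dual-matching size bound from Lemma~\ref{lem:dual} (applied to the residual instance $H'$), and the concentration estimate from Lemma~\ref{lem:expected}. Write $t' = t/2 + 2\sqrt{t\ln t}$ and let $\textsf{OPT}'$ denote the value of the LP on the trimmed hypergraph $H'$ produced by Algorithm~\ref{alg:thresholding}. Lemma~\ref{lem:rounding-high} immediately gives $|U| \le t'\,(\textsf{OPT} - \textsf{OPT}')$, so the only remaining task is to bound $\E[|T|]$ in terms of $\textsf{OPT}'$.

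First I would split $T = S' \cup f^{-1}(p)$ and bound each piece separately. Since $p$ is chosen so that $f^{-1}(p)$ is the smaller of the two fibers of $f : S \to \{0,1\}$, one has the deterministic bound $|f^{-1}(p)| \le |S|/2$. Lemma~\ref{lem:expected} gives $\E[|S'|] \le (2/t)|S|$. Adding these,
\begin{equation*}
\E[|T|] \;\le\; \left(\tfrac{1}{2} + \tfrac{2}{t}\right)|S|.
\end{equation*}
Now apply Lemma~\ref{lem:dual} to the residual hypergraph $H'$: since every vertex in $S$ carries strictly positive LP weight with respect to the current optimal solution $\overline{x}$ on $H'$, the same complementary-slackness argument yields $|S| \le t \cdot \textsf{OPT}'$. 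Substituting,
\begin{equation*}
\E[|T|] \;\le\; \left(\tfrac{t}{2} + 2\right)\textsf{OPT}' \;\le\; t'\,\textsf{OPT}',
\end{equation*}
where the last inequality uses $2 \le 2\sqrt{t\ln t}$ for $t \ge 2$.

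Putting the two pieces together,
\begin{equation*}
\E[|T \cup U|] \;\le\; \E[|T|] + |U| \;\le\; t'\,\textsf{OPT}' + t'\,(\textsf{OPT} - \textsf{OPT}') \;=\; t'\,\textsf{OPT},
\end{equation*}
which is exactly the claimed bound. There is no real obstacle here beyond making sure the bookkeeping is consistent: in particular, Lemma~\ref{lem:dual} must be invoked with $H'$ in place of $H$ (the proof there only uses LP-duality and complementary slackness, both of which apply unchanged), and the small slack between $t/2 + 2$ and $t'$ is what absorbs the additive loss coming from $S'$. Combined with Lemma~\ref{lem:vertex-cover} (which says $T \cup U$ is a feasible vertex cover), this completes the approximation guarantee and, since $\textsf{OPT} = \tau^*(H)$, also proves the fractional bound asserted in Theorem~\ref{thm:fractional-tuza}.
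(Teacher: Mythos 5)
Your proof is essentially identical to the paper's: same decomposition of $T$ into $S'$ and $f^{-1}(p)$, same invocation of Lemma~\ref{lem:expected} and the dual bound $|S|\le t\cdot\textsf{OPT}'$ on the residual instance, and the same recombination with Lemma~\ref{lem:rounding-high} where the slack $2\le 2\sqrt{t\ln t}$ absorbs the $S'$ term. The only cosmetic difference is that you spell out the telescoping $t'\textsf{OPT}' + t'(\textsf{OPT}-\textsf{OPT}') = t'\textsf{OPT}$ explicitly, which the paper compresses into a single inequality.
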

\begin{proof}
Note that by definition, $|f^{-1}(p)|\leq \frac{|S|}{2}$. We bound the expected size of the output of the algorithm $T\cup U$ as   
\begin{align*}
    \E[|T\cup U|] &\leq \E[|T|] + \E[|U|] 
    \ \leq \ \E[|S'|] + \frac{1}{2}|S|+\E[|U|] \\ 
    &\leq \left( \frac{1}{2}+ \frac{2}{t}\right)|S| + \E[|U|] \quad(\text{Using~\Cref{lem:expected}}) \\ 
    &\leq \left( \frac{t}{2} +2 \right) \textsf{OPT'} + \E[|U|] 
    \quad(\text{Using~\Cref{lem:dual}}) \\
    &\leq \left(\frac{t}{2}+2\sqrt{t\ln t}\right) \textsf{OPT}
        \quad(\text{Using~\Cref{lem:rounding-high}}) \ . \qedhere
\end{align*}
\end{proof}

\noindent 
\Cref{lem:vertex-cover} and~\Cref{lem:expectation-output} together imply~\Cref{thm:fractional-tuza}.

\subsection{$(t,2)$-version of \textsf{AHTP}}
\label{subset-(t,2)}
An interesting generalization of \textsf{AHTP} is the $(t,k)$-version, the problem of vertex cover on the $k$-blown-up hypergraph $H=G^{(k)}$ for a $t$-uniform hypergraph $G$, for an arbitrary $1\leq k < t$. 
The case of $k=1$ is the standard vertex cover on $t$-uniform hypergraphs, and $k=t-1$ is the \textsf{AHTP}. Note that there is a trivial $\binom{t}{k}$-factor approximation algorithm for this problem as it can be cast as an instance of vertex cover on a $\binom{t}{k}$-uniform hypergraph. 
The above algorithm can be shown to achieve a $\binom{t}{k}c(k)$ approximation guarantee for the general problem where $c(k)\rightarrow \frac 12+o(1)$ as $k \rightarrow t-1$. 

We now turn our attention to the interesting case of $k=2$.
When the hypergraph $G$ consists of $t$-cliques in a graph, the vertex cover problem on $G^{(2)}$ is the generalization of Tuza's problem where we try to hit all $t$-cliques with the fewest possible edges. 
 Note that in this case, the trivial hypergraph vertex cover algorithm achieves a $\binom{t}{2}$-factor approximation. We describe how a simplified version of our algorithm can be used to get a $\frac{t^2}{4}$-factor guarantee: Let $H=G^{(2)}=(V(H),E(H))$, and we iteratively solve the Vertex Cover LP on $H$ to round all the vertices with value at least $\frac{4}{t^2}$. In the remaining instance, we let $S \subseteq V(H)$ to be the vertices of $H$ that are assigned non-zero LP value i.e. $S= \{ v \in V(H) : x_v > 0 \}$. We use a color coding function $c:[n]\rightarrow \{0,1\}$ picked uniformly and independently at random, and we output all the vertices $T=\{ \{i,j\} \in S: c(i)=c(j)\}$. The expected size of $T$ is at most $\frac{1}{2}\binom{t}{2}\textsf{OPT}\leq \frac{t^2}{4}\textsf{OPT}$ as the cardinality of $S$ is at most $\binom{t}{2}\textsf{OPT}$. 
 
We now argue that $T$ is indeed a vertex cover of $H$. Consider an edge $e=\{v_{i,j}:i\neq j \in [t]\}\in E(H)$ corresponding to the edge $e'=(u_1, u_2, \ldots, u_t)\in E(G)$.
Recall that every element of $e$ corresponds to a subset of size $2$ of $e'$, and thus, without loss of generality, let $v_{i,j}=\{u_i, u_j\}$ for all $i,j \in [t]$.
As $x_{v_{i,j}}< \frac{4}{t^2}$ for all $i,j\in [t]$, there are greater than $\frac{t^2}{4}$ pairs of indices $i,j$ such that $x_{v_{i,j}}>0$, or equivalently, $v_{i,j} \in S$. 
Thus, $|e \cap S| > \frac{t^2}{4}$.
For every function $c:[t]\rightarrow \{0,1\}$, the number of pairs of indices $i \neq j \in [t]$ such that $c(i)\neq c(j)$ is at most $\frac{t^2}{4}$. Thus, there are at least $\binom{t}{2}-\frac{t^2}{4}$ pairs of indices $i\neq j \in [t]$ such that $c(u_i)=c(u_j)$. As $|e \cap S|>\frac{t^2}{4}$, there exists a pair of indices $i\neq j \in [t]$ such that $v_{i,j} \in S$, $c(u_i)=c(u_j)$, which implies that $v_{i,j} \in T$. Thus, for every edge $e \in E(H)$ of $H$, there exists an element $v \in e$ such that $v\in T$, which completes the proof that $T$ is a vertex cover of $H$. 

As a corollary of the algorithm, we deduce that for all hypergraphs $H=G^{(2)}$ of a $t$-uniform hypergraph $G$,
\[
\tau(H) \leq \frac{t^2}{4} \tau^*(H)
\]
This proves the fractional version of a conjecture due to Aharoni and Zerbib~\cite{AZ20} (Conjecture $1.4$) for the case when $k=2$.
\section{Forbidden sub-hypergraphs and Tuza's conjecture}
\label{sec:structure}
   Since \textsf{AHTP} is the problem of vertex cover on $H=G^{(t-1)}$ for a given $t$-uniform hypergraph $G$, an interesting question is to characterize the $t$-uniform hypergraphs $H$ that can arise as the blown-up hypergraph $G^{(t-1)}$ of some $t$-uniform hypergraph $G$. A very simple necessary condition is that the hypergraph $H$ should be simple. However, this is not sufficient---there are simple $t$-uniform hypergraphs $H$ that cannot be written as $H=G^{(t-1)}$ for any $G$. For example, the $t$-tent hypergraph (\Cref{def:tent}) is a simple hypergraph, but cannot be written as a $(t-1)$-blown-up hypergraph. 
   A natural question in this context is the following: 
   \begin{quote}
       Is there a finite set of hypergraphs $\mathcal{F}$ such that every hypergraph that does not have any member of $\mathcal{F}$ as a  sub-hypergraph can be represented as $G^{(t-1)}$ for some $t$-uniform hypergraph $G$?
   \end{quote}
In addition to its inherent structural interest, the above question can shed light on Tuza's conjecture.
Recall that Aharoni and Zerbib\cite{AZ20} proposed a generalization of Tuza's conjecture stating that $\tau\left(G^{(2)}\right)\leq 2\cdot \nu\left(G^{(2)}\right)$ for all $3$-uniform hypergraphs $G$. They suggested that understanding the structure of blown-up hypergraphs, and specifically, the sub-hypergraphs that it excludes might be a promising approach to establish this conjecture. In particular, they observed that the blown-up hypergraphs do not contain ``tents" as a sub-hypergraph. 
\begin{definition}
	\label{def:tent}
	A $t$-tent (\Cref{fig:tent}) is a set of four $t$-uniform edges $e_1, e_2, e_3, e_4$ such that 
	\begin{enumerate}
		\item $\cap_{i=1}^3 e_i \neq \phi $. 
		\item $| e_4 \cap e_i | = 1 $ for all $i \in [3]$. 
		\item $ e_4 \cap e_i \neq e_4 \cap e_j $ for all $ i \neq j \in [3]$. 
	\end{enumerate}
\end{definition}
In~\cite{AZ20}, the authors pose the following question. Note that an answer in the affirmative would resolve Tuza's conjecture, and in fact its above generalization that  $\tau\left(G^{(2)}\right)\leq 2\nu\left(G^{(2)}\right)$ for all $3$-uniform hypergraphs $G$. 
\begin{problem}
\label{prob:tents}
	Is it true that for every $3$-uniform hypergraph $H$ without a $3$-tent, $\tau(H) \leq 2 \cdot \nu(H)$?
\end{problem}
We answer this question in the negative. In fact, we prove a stronger statement that there can be no forbidden substructure-based Tuza's theorem. 

\begin{theorem}
\label{thm:structure}
    Let $\mathcal{F}=\{F_1, F_2, \ldots, F_{\ell}\}$ be an arbitrary set of  $t$-uniform hypergraphs such that for every $t$-uniform hypergraph $G$, the blown-up hypergraph $G^{(t-1)}$ does not contain any $F_i \in \mathcal{F}$ as a sub-hypergraph.
    
    Then, for every $\epsilon >0$, there exists a hypergraph $H'$ that does not contain any member of $\mathcal{F}$ as  a sub-hypergraph and which satisfies  $\tau(H')\geq (t-\epsilon)\tau^*(H') \geq (t-\epsilon)\nu(H')$.
\end{theorem}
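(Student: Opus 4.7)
The argument proceeds by probabilistic alteration. The starting observation is that any $t$-uniform hypergraph $T$ admits the uniform fractional cover $x_v = 1/t$, so $\tau^*(T) \le |V(T)|/t$, and by LP duality $\nu(T) \le \tau^*(T)$. Hence it suffices to construct an $\mathcal{F}$-free $t$-uniform hypergraph $T$ on $N$ vertices with independence number $\alpha(T) \le (\epsilon/t) N$, since then $\tau(T) = N - \alpha(T) \ge (1-\epsilon/t)N \ge (t-\epsilon)\tau^*(T) \ge (t-\epsilon)\nu(T)$. The goal is to produce such a $T$ by deleting few vertices from a sparse random $t$-uniform hypergraph.

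\textbf{Step 1: extracting a density bound from $\mathcal{F}$.} The crux is to convert the qualitative hypothesis ``$F$ is not a sub-hypergraph of any $(t-1)$-blown-up hypergraph'' into a quantitative density statement. The target is a constant $\eta = \eta(\mathcal{F}) > 0$ such that every $F \in \mathcal{F}$ contains a sub-hypergraph $F^\star$ with $(t-1)\, e(F^\star) \ge v(F^\star) - 1 + \eta$. As sanity checks: the $t$-tent has $v = 4t-5$ and $e = 4$, giving $(t-1)e - v + 1 = 2$; and any non-simple $F$ contains a pair of edges meeting in $\ge 2$ vertices, giving $v \le 2t-2$ and $e \ge 2$, hence $(t-1)e - v + 1 \ge 1$. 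Both live comfortably above the threshold; I would invoke the structural/sparsity results of~\cite{FM08, BFM10} (or a case analysis) to cover the general case.

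\textbf{Step 2: random construction and alteration.} Let $H \sim G^t(n,p)$ be the random $t$-uniform hypergraph on $[n]$ with each $t$-subset included independently with probability $p = C n^{-(t-1)}$, where $C = C(\epsilon, \mathcal{F})$ is a large constant. The standard first-moment / union bound estimate $\Pr[\alpha(H) > k] \le \binom{n}{k}(1-p)^{\binom{k}{t}}$ gives $\Pr[\alpha(H) > (\epsilon/(2t)) n] = o(1)$ for $C$ chosen large enough. At the same time, Step~1 implies that the expected number of (labeled) copies of each $F^\star$ in $H$ is at most $n^{v(F^\star)} p^{e(F^\star)} = O(n^{1-\eta})$, so by Markov's inequality the total number of $F^\star$-copies across $\mathcal{F}$ is $o(n)$ with probability $1-o(1)$. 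Both events hold simultaneously with positive probability; fix such an outcome and let $T$ be obtained by deleting one vertex from each $F^\star$-copy. Then $T$ is $\mathcal{F}$-free (since every copy of $F \in \mathcal{F}$ contains a copy of its sub-hypergraph $F^\star$), $|V(T)| \ge n - o(n)$, and $\alpha(T) \le \alpha(H) \le (\epsilon/(2t))n$ because independent sets of $T$ remain independent in $H$. The opening observation then yields $\tau(T) \ge (t-\epsilon)\tau^*(T)$ for $n$ large.

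\textbf{Main obstacle.} Step~2 is a routine application of the probabilistic method once Step~1 is available, so the real work is in Step~1: converting a qualitative absence condition (``$F$ cannot be embedded in a blown-up hypergraph'') into a numerical density bound. Beyond the easy cases of non-simple or tent-like $F$, one either needs a careful structural/case analysis of the forbidden family or an appeal to the sparse random hypergraph machinery of~\cite{FM08, BFM10}; this is where I would expect most of the proof's effort to go.
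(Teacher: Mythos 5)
Your overall architecture matches the paper's: reduce to building an $\mathcal{F}$-free hypergraph with $o(n)$ independence number, extract a density bound from the non-embeddability hypothesis, and then do a random construction plus alteration. Your Step~2 is correct and in fact somewhat more elementary than the paper's: you take $p = Cn^{-(t-1)}$, kill copies by deleting $o(n)$ \emph{vertices}, and bound $\alpha$ by a direct first-moment computation, whereas the paper takes the denser $p = n^{-1/\rho}$, deletes a maximal edge-disjoint collection of copies, and cites the deletion-method results of \cite{FM08,BFM10} for the (much stronger, polynomially small) independence number bound. For the purposes of this theorem your lighter alteration suffices, and the monotonicity facts you use ($\alpha(T)\le\alpha(H)$, $\mathcal{F}$-freeness after vertex deletion, $\tau = |V|-\alpha$) all check out.

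The genuine gap is Step~1, which you correctly identify as the crux but do not prove, and the route you suggest would not close it: \cite{FM08,BFM10} concern independence numbers of random hypergraphs after deleting forbidden copies, not the structure of hypergraphs that embed into $(t-1)$-blow-ups, so there is nothing there to invoke. The paper proves exactly your target statement (in the form $\rho(F_i) = \max_{F'\subseteq F_i} \frac{e'-1}{v'-t} > \frac{1}{t-1}$, which for integers is equivalent to your $(t-1)e(F^\star)\ge v(F^\star)$) by contraposition: if every non-trivial connected sub-hypergraph $F'$ of some $F_i$ satisfies $(t-1)(e'-1)\le v'-t$, then ordering the edges so each meets the union of its predecessors, a vertex-counting argument forces every new edge to attach at \emph{exactly one} vertex, i.e.\ $F_i$ is a ``loose tree.'' One then builds, edge by edge, a $t$-uniform hypergraph $H$ with $H^{(t-1)}\cong F_i$: the attachment vertex of $F_i$ corresponds to a $(t-1)$-set $\{p_1,\dots,p_{t-1}\}$ of $V(H)$, and one adds a fresh vertex $v'$ and the edge $\{v',p_1,\dots,p_{t-1}\}$, whose blow-up contributes a new edge meeting the old blow-up in precisely that one vertex. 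This contradicts the hypothesis that no $F_i$ embeds in a blow-up. Without this argument (or some substitute for it), your proof never actually uses the hypothesis on $\mathcal{F}$, and the density bound on which your entire Step~2 rests is unsupported; note also that a case analysis as in your sanity checks cannot work in general, since $\mathcal{F}$ is an arbitrary finite family.
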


By setting $\mathcal{F}$ to be the single $3$-tent hypergraph, we obtain a counterexample to~\Cref{prob:tents}.
Furthermore, when $t=3$, the construction we give to prove Theorem~\ref{thm:structure} will belong to the 
class of $3$-uniform hypergraphs $H$ obtained from a given graph $G$ with the vertex set of $H$ being the edge set of $G$, and every triangle in $G$ forming an edge in $H$. Thus, there is no ``local'' proof of Tuza's conjecture that uses only substructure properties of the underlying hypergraph. 

We call a hypergraph non-trivial if it has at least two edges. Before we prove the above theorem, we use a definition from~\cite{FM08}. 
\begin{definition}
    Let $F$ be a non-trivial $t$-uniform hypergraph. Then, 
    \[
    \rho(F)=\max_{F' \subseteq F}\frac{e'-1}{v'-t}
    \]
    where $F'$ is a non-trivial subhypergraph of $F$ with $e'>1$ edges and $v'$ vertices.  
\end{definition}

We now return to the proof of Theorem~\ref{thm:structure}.
\begin{proof}(of Theorem~\ref{thm:structure})
We will first prove that $\rho(F_i)>\frac{1}{t-1}$ for all $i \in [\ell]$. Suppose for contradiction that there exists a $t$-uniform hypergraph $F_i \in \mathcal{F}$ such that $\rho(F_i)\leq \frac{1}{t-1}$. 
Without loss of generality, we can assume that $F_i$ is connected. 
Order the edges of $F_i$ as $\{e_1, e_2, \ldots, e_m\}$ such that for every $j>1$, $e_j \cap (e_1 \cup e_2 \cup \ldots \cup e_{j-1}) \neq \phi.$ For every $j \geq 1, $ let $F'_j$ be the subhypergraph induced by $\{e_1, e_2, \ldots, e_j\}$. As $\rho(F_i)\leq \frac{1}{t-1}$, we can infer that for every $j>1$,
\[
\frac{|E(F'_j)|-1}{|V(F'_j)|-t} \leq \frac{1}{t-1}
\]
which implies that $|V(F'_j)|\geq (t-1)|E(F'_j)|+1 =(t-1)j+1$. 
As $|V(F'_j)|= |V(F'_{j-1})|+t-|e_j \cap ( e_1 \cup e_2 \cup \ldots \cup e_{j-1})|$, we get that $|V(F'_j)|\leq |V(F'_{j-1})|+t-1$, which combined with the above shows that the inequality is in fact tight for every $j >1$. Thus, for every $j > 1, |V(F'_j)|=|V(F'_{j-1})|+t-1$, which implies that for every $j >1$,
\[
|e_j \cap ( e_1 \cup e_2 \cup \ldots \cup e_{j-1})|=1.
\]

We now construct a $t$-uniform hypergraph $H$ such that $F_i$ is isomorphic to $H^{(t-1)}$. We construct the hypergraph $H$ inductively via $H_1, H_2, \ldots, H_m=H$ such that $H_j^{(t-1)}$ is isomorphic to $F'_j$ for all $j \in [m]$. First, we set the hypergraph $H_1$ to be equal to the $t$-uniform hypergraph on $t$ vertices with a single edge. $H_1$ is trivially isomorphic to $F'_1$. Assume by inductive hypothesis that there is a hypergraph $H_k$ such that $F'_k$ is isomorphic to $H_k^{(t-1)}$ for some $k \in [m-1]$. Let $\phi : F'_k \rightarrow H_k^{(t-1)}$ be the isomorphism between the two hypergraphs. 
The hypergraph $F'_{k+1}$ is obtained from $F'_k$ by adding an edge $e_{k+1}$ such that $e_{k+1}$ intersects with $F'_k$ in exactly one vertex $v\in V(F'_k)$. Recall that the vertex set of $H_k^{(t-1)}$ is the set of subsets of vertices of $H_k$ of size $t-1$. Thus, $\phi(v)=\{(p_1,p_2, \ldots, p_{t-1})\}$ for a set of vertices $p_1,p_2, \ldots, p_{t-1} \in V(H_k)$. We construct $H_{k+1}$ by introducing a new vertex $v'$ and adding the edge $\{v',p_1,p_2, \ldots, p_{t-1}\}$ to the hypergraph $H_k$. Thus, $H_{k+1}^{(t-1)}$ is obtained from $H_{k}^{(t-1)}$ by adding single edge that intersects with $H_{k}^{(t-1)}$ at exactly one vertex, that is $\{p_1, p_2, \ldots, p_{t-1} \}$. Hence, $H_{k+1}^{(t-1)}$ is isomorphic to $F'_{k+1}$, completing the proof. 
This proves that there exists a $t$-uniform hypergraph $H=H_m$ such that $F_i=H^{(t-1)}$, contradicting the fact that no $(t-1)$-blown-up hypergraph contains $F_i$ as a subhypergraph. 

Thus, $\rho(F_i)>\frac{1}{t-1}$ for all $i \in [\ell]$. Let $\rho = \min_{i \in [\ell]}\rho(F_i) > \frac{1}{t-1}$.
Consider a random $t$-uniform hypergraph $H'$ on $n$ vertices sampled by picking each edge independently with probability $p=n^{-\frac{1}{\rho}}$. We now delete the edges in a maximal collection of edge disjoint copies of members of $\mathcal{F}$ from $H'$. It has been proved~\cite{BFM10,FM08} that the maximum independent set $\alpha(H')$ of this construction satisfies 
\[
\alpha(H') \leq \Tilde{O}\left(n^{\frac{1}{(t-1)\rho}} \right) 
\]
with high probability. Thus, there exists a $t$-uniform hypergraph $H'$ with $n$ vertices without any substructure from $\mathcal{F}$ such that $\tau(H')\ge (1-o(1))n$. Since for any $t$-uniform hypergraph $H'$ on $n$ vertices, $\nu(H')\leq \tau^*(H')\leq \frac{1}{t}n$, this proves the claimed factor $(t-\epsilon)$ gap between $\tau(H')$ and $\tau^*(H')$ for every positive constant $\epsilon>0$.
\end{proof}

\subsection{Explicit construction of tent-free hypergraphs}
We now describe an explicit hypergraph giving negative answer to~\Cref{prob:tents}. Our counterexample is a hypergraph with vertex set $[3]^n$ for large enough $n$ and the edge set is the set of all combinatorial lines that we formally define below: 
\begin{definition}(Combinatorial lines in $[3]^n$)
    A set of three distinct vectors $u=(u_1, u_2, \ldots, u_n), v=(v_1, v_2, \ldots, v_n), w=(w_1, w_2, \ldots, w_n) \in [3]^n$ forms a combinatorial line if there exists a subset $S \subseteq [n]$ such that 
	\begin{enumerate}
		\item For all $i \in [n]\setminus S$, $u_i = v_i = w_i$. 
		\item There exist three distinct integers $u', v', w' \in [3]$ such that for all $i \in S$, $u_i = u', v_i = v', w_i = w'$. 
	\end{enumerate}
\end{definition}
We will use the following seminal result about combinatorial lines: 
\begin{theorem}(Density Hales Jewett Theorem \cite{FK91},\cite{poly12} )
	\label{thm:dhj}
	For every positive integer $k$ and every real number $\delta > 0$ there exists a positive integer $\textsf{DHJ}(k,\delta)$ such that if $n \geq \textsf{DHJ}(k, \delta)$ and $A$ is any subset of $[k]^n$ of density at least $\delta$, then $A$ contains a combinatorial line. 
\end{theorem}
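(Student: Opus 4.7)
The Density Hales--Jewett (DHJ) theorem is a deep classical result whose two known proofs --- the ergodic-theoretic proof of Furstenberg--Katznelson \cite{FK91} and the purely combinatorial proof of the Polymath project \cite{poly12} --- are substantial arguments in their own right. My plan would follow the combinatorial approach, proceeding by a density-increment induction on the alphabet size $k$, with $n$ tracked through the loss of dimension at each step.

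The base case $k=2$ reduces to showing that any subset of $\{1,2\}^n$ of positive density contains a comparable pair (a monotone combinatorial line), which follows from a Sperner-style chain decomposition of the Boolean cube. For the inductive step, assume the theorem for alphabet $k-1$ and suppose $A \subseteq [k]^n$ has density at least $\delta$ but contains no combinatorial line. The central concept is $ij$-\emph{insensitivity}: a set $B \subseteq [k]^n$ is $ij$-insensitive in a coordinate if swapping symbols $i$ and $j$ in that coordinate preserves membership of $B$. I would aim for a dichotomy: either $A$ correlates non-trivially with a $k$-insensitive set, in which case one passes to a combinatorial subspace on which $A$ effectively lives in $[k-1]^m$ and the induction hypothesis applies directly; or else $A$ is ``pseudorandom'' with respect to such sets, in which case an averaging argument over random combinatorial subspaces produces one on which the density of $A$ has strictly increased by some $\eta = \eta(\delta, k) > 0$.

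Iterating the density-increment step at most $O(1/\eta)$ times must eventually push the relative density above $1$, a contradiction, so $A$ must in fact contain a combinatorial line; the bound $\textsf{DHJ}(k,\delta)$ is then obtained by tracking the ambient dimension through each iterate. The main obstacle is executing the dichotomy cleanly. This requires working in the equal-slices measure on $[k]^n$ (under which random combinatorial lines have tractable marginals), a regularity-style decomposition of indicator functions into structured and pseudorandom parts relative to low-complexity insensitive sets, and a careful averaging scheme to locate the subspace on which the density increases. The Polymath argument orchestrates these ingredients delicately; rather than reproducing it, for the purposes of this paper I would simply invoke the theorem as a black box from \cite{FK91, poly12} and proceed to apply it to the construction on $[3]^n$ in the next step.
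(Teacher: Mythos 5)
The paper does not prove this theorem either; it is imported as a black box with citations to \cite{FK91} and \cite{poly12}, which is exactly what you conclude by doing after your (accurate) sketch of the Polymath density-increment strategy. Your treatment therefore matches the paper's, and citing the result is the appropriate course here.
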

We now prove~\Cref{thm:tent}.
\begin{T1}
  	For every $\epsilon>0$, there exists a $3$-uniform hypergraph $H$ without a $3$-tent such that $\tau(H) > (3-\epsilon) \nu(H)$. 
\end{T1}
\begin{proof}
	
	The hypergraph that we use $H=(V,E)$ has $V=[3]^n$ for $n$ large enough to be set later, and the edges are all the combinatorial lines in $[3]^n$.  
	First, we claim that the above defined hypergraph does not have a $3$-tent. Suppose for contradiction that there are edges $e_1, e_2, e_3, e_4$ satisfying the properties of~\Cref{def:tent}. Let $u = (u_1, u_2, \ldots, u_n) \in e_4 \cap e_1, v = (v_1, v_2, \ldots, v_n) \in e_4 \in e_2, w = (w_1, w_2, \ldots, w_n) \in e_4 \cap e_3$. Note that $e_4 = \{ u, v, w\}$. Thus, there exists a subset $S \subseteq [n]$ such that for all $i \in [n] \setminus S$, $u_i = v_i = w_i$. Without loss of generality, we can also assume that for all $i \in S$, $u_i = 1, v_i = 2, w_i = 3$. 

        
	Let $x=(x_1, x_2, \ldots, x_n) \in e_1 \cap e_2 \cap e_3$. Note that $\{ x, u\} \subseteq e_1, \{ x, v\} \subseteq e_2, \{ x, w\} \subseteq e_3$. Consider an arbitrary element $p \in S$, and without loss of generality, let $x_p = 1$. Thus, we have that $x_p = 1, v_p = 2$ and both $x, v$ share the combinatorial line $e_2$.  
	This implies that there exist a subset $S_2 \subseteq [n]$ such that for all $i \in [n]\setminus S_2$, $x_i = v_i$ and for all $i \in S_2$, $x_i = 1, v_i = 2$. Similarly, there exists a subset $S_3 \subseteq [n]$ such that for all $i \in [n] \setminus S_3$, $x_i = w_i$ and for all $i \in S_3$, $x_i = 1, w_i = 3$. 
	
	Note that $S_2 \subseteq S$. Suppose for contradiction that there exists $j \in S_2 \setminus S$. Then, we have $v_j = 2, x_j = 1$. However, since $v_i = w_i$ for all $i \in [n] \setminus S$, we get that $w_j = 2$, and thus, $j \notin S_3$, which implies that $x_j = w_j = 2$, a contradiction. 
	Thus, $S_2 \subseteq S$, and similarly $S_3 \subseteq S$.
	We can also observe that $S_2 \neq S$ since in that case, $x = u$ which cannot happen since $|e_4 \cap e_2|=1$. By the same argument on $e_3$, we can deduce that $S_3 \neq S$. 
	As $S_2$ is a strict subset of $S$, there exists $j \in S \setminus S_2$. As $v_i = x_i$ for all $i \in [n]\setminus S_2$, $x_j = v_j = 2$. As $j \in S$, we have $w_j = 3$. However, as $w_j \neq x_j$, this implies that $j \in S_3$, which then implies that $x_j = 1$, a contradiction.    
	
	Now, we will prove that for large enough $n$, $\tau(H) > (3-\epsilon) \nu(H)$. Let $N= 3^n$. Since the cardinality of $V$ is equal to $N$, we have $\nu(H) \leq \frac{N}{3}$. We apply~\Cref{thm:dhj} with $k=3, \delta = \frac{\epsilon}{3}$, and set $n \geq \textsf{DHJ}(k,\delta)$. Thus, we can infer that in any subset $T \subseteq V$ of size $\frac{\epsilon}{3} N$, there exists an edge of $H$ fully contained in $T$. Thus, we get that $\tau(H)> (1-\frac{\epsilon}{3}) N$, which gives $\tau(H)>(3-\epsilon)\nu(H)$. 
\end{proof}

\section{Vertex cover and set cover on simple hypergraphs}
\label{sec:simple}
As mentioned earlier, the edges in a $(t-1)$-blown-up hypergraph of a $t$-uniform hypergraph can intersect on at most one element, so such hypergraphs are simple.
In this section, we will take a step back and address to what extent improved approximation algorithms are possible for vertex cover on simple hypergraphs. We will also consider the dual problem, of covering the vertices by the fewest possible hyperedges, namely the set cover problem, on simple hypergraphs but without any restriction on the size of the hyperedges. Note that a hypergraph is simple if and only if the edge-vertex incidence bipartite graph does not contain a copy of $K_{2,2}$. 
Thus, a hypergraph is simple if and only if its dual is simple.

\subsection{Vertex cover on simple $t$-uniform hypergraphs}

We now prove~\Cref{thm:simple} which shows that simple hypergraphs are still rich enough to preclude a non-trivial approximation to vertex cover.
Our hardness is established using a reduction from the general problem of vertex cover on $t$-uniform hypergraphs. In particular, we use the following result:
\begin{theorem}(\cite{DGKR05})
	\label{thm:hypergraph-vc}
	For every constant $\epsilon > 0$ and $t \geq 3$, the following holds: Given a $t$-uniform hypergraph $G=(V,E)$, it is NP-hard to distinguish between the following cases: 
	\begin{enumerate}
		\item Completeness: $G$ has a vertex cover of measure $\frac{1+\epsilon}{t-1}$.
		\item Soundness: Any subset of $V$ of measure $\epsilon$ contains an edge from $E$. 
	\end{enumerate}
\end{theorem}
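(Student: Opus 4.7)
The plan is a reduction from multi-layered Label Cover to vertex cover on $t$-uniform hypergraphs, using the biased long code, in the style of Dinur--Guruswami--Khot--Regev. I would first fix the starting hard problem: a multi-layered Label Cover instance with $t$ layers $V_1,\dots,V_t$, label alphabet of size $R$, perfect completeness, and arbitrarily small soundness $\delta=\delta(\epsilon)$. This is obtained by combining Raz's parallel repetition (to drive soundness down) with a layered construction enjoying a weak density property: any choice of an $\epsilon'$-fraction of queries from each layer still spans many inter-layer consistency constraints.

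The reduction then produces a weighted $t$-uniform hypergraph on clouds. For each query $u\in V_i$, create a cloud of $2^R$ vertices indexed by $\{0,1\}^R$, weighted by the $p$-biased product measure $\mu_p$ with $p=(1+\epsilon/2)/(t-1)$. For each multi-layered consistency check linking one query $u_i$ per layer with projections $\pi_i$ to a common super-label, insert hyperedges on sampled tuples $(x_1,\dots,x_t)$, one per cloud, such that the projected strings $y_i\in\{0,1\}^R$ satisfy a covering relation like $y_1\vee\cdots\vee y_t=\mathbf{1}$ coordinate-wise. Completeness follows directly: given a satisfying labeling $\sigma$, take as vertex cover all $x$ in the cloud of $u$ with $x_{\sigma(u)}=1$. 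This set has $\mu_p$-measure exactly $(1+\epsilon/2)/(t-1)$ and hits every hyperedge, because at the coordinate $\sigma(\text{super-label})$ the covering-OR equals $1$, forcing some $(x_i)_{\sigma(u_i)}=1$.

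Soundness is the key step and the main obstacle. Assuming an independent set of measure $\geq\epsilon$, an averaging argument produces a constant fraction of queries $u$ whose cloud-indicators $f_u:\{0,1\}^R\to\{0,1\}$ have $\mu_p$-mean at least $\epsilon/2$, while jointly avoiding the covering-OR test. I would then invoke a dictatorship-style analysis: via hypercontractivity on the $p$-biased cube and a biased Fourier or Invariance-style argument, any such $f_u$ of bounded mean that evades the $t$-wise covering test must have a coordinate of large low-degree influence---otherwise its ``noisy'' $t$-wise correlations would mimic those of an i.i.d.\ sample and the covering test would succeed with positive probability. Decoding labels from these influential coordinates, and invoking the weak density property of the multi-layered PCP to stitch decoded labels across layers through the projections $\pi_i$, yields a labeling of the original Label Cover instance with value $\gg\delta$, contradicting soundness once $\delta$ is chosen small enough.

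The main technical obstacle is making the dictatorship analysis quantitatively work at bias $p=\Theta(1/t)$, which shrinks with $t$: hypercontractive estimates and Invariance-based tools degrade as the bias approaches $0$, and the Label Cover projection structure must be handled carefully so that the ``influential'' coordinates extracted from different clouds decode to labels that are \emph{consistent} with each other under the projections. Overcoming exactly these issues is the contribution of Dinur--Guruswami--Khot--Regev, whose multi-layered PCP and tailored biased long-code test deliver the stated completeness-soundness gap for every $\epsilon>0$ and $t\geq 3$.
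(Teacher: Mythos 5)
This statement is not proved in the paper at all: it is quoted verbatim from \cite{DGKR05} and used as a black box in the reduction to simple hypergraphs. Your outline correctly reconstructs the strategy of that cited source --- a multi-layered PCP with the weak density property, a $p$-biased long code with $p\approx\frac{1}{t-1}$, a covering test on hyperedges, and an influence-based decoding for soundness --- so it is essentially the same approach as the proof the paper relies on, though as you acknowledge it remains a plan rather than a self-contained argument.
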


We give a randomized reduction from~\Cref{thm:hypergraph-vc} to~\Cref{thm:simple}. The approach is similar to the one used in \cite{GL-sidma} for showing the inapproximability of $H$-Transversal in graphs. It was also used in the recent tight hardness for Max Coverage on simple set systems~\cite{CKL20a}.

Let us instantiate~\Cref{thm:hypergraph-vc} with $\epsilon$ replaced by $\epsilon' = \frac{\epsilon}{4}$, and let the resulting hypergraph be denoted by $G$. 
Now, given this $t$-uniform hypergraph $G=(V,E)$, we output a $t$-uniform hypergraph $H=(V',E')$ as follows: Let $n=|V|, m=|E|$.
We have integer parameters $B,P$ depending on $\epsilon,t,n,m$ to be set later. 
The vertex set of $H$ is $V'=V \times [B]$--we have a cloud of $B$ vertices $v^1,v^2,\ldots, v^B$ in $V'$ corresponding to every vertex $v \in V$. 
For every edge $e=(v_1, v_2, \ldots, v_t) \in E$, we pick $P$ edges $e^1, e^2,\ldots, e^P$ with $e^i= ((v_1)_i, (v_2)_i, \ldots, (v_t)_i)$ and add them to $E'$, where for each $j \in [t]$ and $i \in [P]$, $(v_j)_i$ is chosen uniformly and independently at random from $(v_j)^1, (v_j)^2, \ldots ,(v_j)^B$. 
Thus, so far, we have added $mP$ edges to $E'$.

We first upper bound the expected value of the number of pairs of edges in $E'$ that intersect in more than one vertex. 
Order the edges in $E'$ as $e_1, e_2, \ldots, e_{mP}$.
Let $X$ denote the random variable that counts the number of pairs of edges in $E'$ that intersect in more than one vertex. 
For every pair of indices $i,j \in [mP]$, let the random variable $X_{ij}$ be the indicator variable of the event that the edges $e_i$ and $e_j$ of $E'$ intersect in greater than one vertex. 
Note that the edges in $E$ corresponding to $e_i$ and $e_j$ have at most $t$ vertices in common. Thus, the probability that $e_i$ and $e_j$ intersect in at least two vertices is upper bounded by $\binom{t}{2}\frac{1}{B^2}$. Summing over all the pairs $i,j$, we get
\[
\E[X] \leq \binom{mP}{2} \binom{t}{2}\frac{1}{B^2} \leq \frac{m^2t^2P^2}{B^2}. 
\]
By Markov's inequality, with probability at least $\frac{9}{10}$, $X$ is at most $\frac{10m^2t^2P^2}{B^2}$.

We consider all the pairs of edges that intersect in more than one vertex in $E'$, and arbitrarily delete one of those edges. Let the resulting set of edges be denoted by $E^{''}$. The final hypergraph resulting in this reduction is $H=(V',E^{''})$.
Note that $H$ is indeed a simple hypergraph. 
We will prove the following: 
\begin{enumerate}
    \item (Completeness) If $G$ has a vertex cover of measure $\mu$, then there is a vertex cover of measure $\mu$ in $H$. 
    \item (Soundness) If every subset of $V$ of measure $\epsilon'$ contains an edge from $E$, then with probability at least $\frac{4}{5}$, every subset of $V'$ of measure $\epsilon$ contains an edge from $E^{''}$. 
\end{enumerate}
\paragraph{Completeness.} If $G$ has a vertex cover of size $\mu n$, then picking all the vertices in $V'$ in the cloud corresponding to these vertices ensures that $H$ has a vertex cover of size $\mu nB$. Thus, in the completeness case, there is a vertex cover of measure $\mu$ in $H$. 

\paragraph{Soundness.} Suppose that every $\epsilon'$ measure subset of $V$ contains an edge from $E$. Our goal is to show that with probability at least $\frac{4}{5}$, every $\epsilon$ measure subset of $V'$ contains an edge from $E^{''}$. 
We first prove the following lemma: 
\begin{lemma}
\label{lem:edge}
With probability at least $\frac{9}{10}$ over the choice of $E'$, the following holds: For every edge $e=(v_1,v_2, \ldots,v_t) \in E$, and every subset $S\subseteq V'$ such that for each $i \in [t]$, $S$ contains at least $\frac{\epsilon}{4}B$ vertices from $\{ v_i^1, v_i^2, \ldots, v_i^B\}$, there exists an edge $e' \in E'$ all of whose vertices are in $S$.
\end{lemma}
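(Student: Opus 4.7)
\textbf{Proof plan for Lemma~\ref{lem:edge}.}

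The plan is to do a union bound over edges of $G$ and over $t$-tuples of "test subsets" inside the clouds, using the fact that the $P$ random edges coming from any fixed $e\in E$ are independent. Fix an edge $e=(v_1,\ldots,v_t)\in E$, and note that by monotonicity it suffices to prove the conclusion for sets $S$ whose intersection with each cloud $\{v_i^1,\ldots,v_i^B\}$ has size exactly $b:=\lceil (\epsilon/4)B\rceil$: if a larger $S$ satisfies the hypothesis, it contains such a minimal witness, and any edge of $E'$ contained in the witness is contained in $S$. So I fix subsets $A_1,\ldots,A_t\subseteq[B]$ with $|A_j|=b$ and bound the probability that none of the $P$ random edges $e^1,\ldots,e^P$ generated from $e$ lies in $A_1\times\cdots\times A_t$.

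For each $i\in[P]$, the tuple $((v_1)_i,\ldots,(v_t)_i)$ has its coordinates drawn independently and uniformly from $[B]$, so the probability it lies in $A_1\times\cdots\times A_t$ is $(b/B)^t\ge(\epsilon/4)^t$. Since the $P$ edges corresponding to $e$ are sampled independently, the probability that none of them lands in $A_1\times\cdots\times A_t$ is at most
\[
\bigl(1-(\epsilon/4)^t\bigr)^P \le \exp\!\bigl(-P(\epsilon/4)^t\bigr).
\]
Now I union bound over the $m$ edges of $E$ and, for each edge, over the at most $\binom{B}{b}^t\le (4e/\epsilon)^{tb}$ choices of $(A_1,\ldots,A_t)$. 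The failure probability is therefore at most
\[
m\cdot(4e/\epsilon)^{tb}\cdot \exp\!\bigl(-P(\epsilon/4)^t\bigr).
\]
Choosing $P$ to be a sufficiently large polynomial in $B$, $\log m$, and $(4/\epsilon)^t$ — explicitly, any $P\ge (4/\epsilon)^t\bigl(\ln(10m)+tb\ln(4e/\epsilon)\bigr)$ suffices — makes this at most $1/10$, proving the lemma.

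The only delicate point is the union-bound accounting: naively there are exponentially many sets $S\subseteq V'$, so the proof must quotient out $S$ by recording only the $t$ "slice patterns" $S\cap\{v_i^1,\ldots,v_i^B\}$ and then exploit monotonicity to restrict to the tight case $|A_j|=b$. Once this is done the calculation is routine, and the resulting requirement on $P$ is polynomial in $B$ and $\log m$, which is compatible with the parameter choices needed for the soundness/completeness argument of the overall reduction (and with the earlier estimate that at most $10m^2t^2P^2/B^2$ collisions occur, since we may take $B$ large enough that $B\gg mtP$).
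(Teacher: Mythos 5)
Your proposal is correct and follows essentially the same route as the paper: a union bound over the $m$ edges of $G$ and over the possible intersection patterns of $S$ with the $t$ clouds, combined with the independence of the $P$ sampled edges to bound the failure probability by $(1-(\epsilon/4)^t)^P$ per pattern. The only difference is cosmetic: the paper union-bounds over all $2^{tB}$ subsets of the clouds rather than using your monotonicity reduction to $\binom{B}{b}^t$ minimal witnesses, and accordingly takes $P = maB$ with $a = 4^{t+2}t/\epsilon^t$; both choices are linear in $B$ and compatible with the rest of the reduction.
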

\begin{proof}
The probability that there exists an edge $e = (v_1, v_2, \ldots, v_t)$ and a subset $S$ which contains at least $\frac{\epsilon}{4}B$ vertices from each cloud and does not contain any edge from $E^{'}$ is at most \[ 
m 2^{tB} \left(1-\left(\frac{\epsilon}{4}\right)^t\right)^P \leq m 2 ^{tB- \log e \frac{\epsilon^t P}{4^t}} \leq \frac{1}{10}
\]
when $P=maB$ where $a:=a(t,\epsilon) = \frac{4^{t+2}t}{\epsilon^t}$. 
\end{proof}

Using the above lemma, we can conclude that with probability at least $\frac{4}{5}$, $X \leq \frac{10m^2t^2P^2}{B^2}= 10m^4t^2a^2$ and for every edge $e \in E$ and every subset $S\subseteq V'$ such that for each $i \in [t]$, $S$ contains at least $\frac{\epsilon}{4}B$ vertices from $\{ v_i^1, v_i^2, \ldots, v_i^B\}$, there exists an edge $e' \in E'$ all of whose vertices are in $S$. 
We claim that this implies that with probability at least $\frac{4}{5}$, every $\epsilon$ measure subset of $V'$ contains an edge of $E^{''}$. 
Consider an arbitrary subset $U \subseteq V'$ such that $|U| \geq \epsilon n B$. We choose $B$ large enough such that $t(10m^4t^2a^2) \leq \frac{\epsilon}{2}nB$.
Thus, the set of the vertices $W$ in the edges deleted from $E'$ to obtain $E^{''}$ has cardinality at most $\frac{\epsilon}{2}nB$. 

Let $U' = U \setminus W$. Note that all the edges in $U'$ that are in $E'$ are present in $E^{''}$ as well. 
As $U'$ has a measure of at least $\frac{\epsilon}{2}$ in $V'$, for at least $\frac{\epsilon}{4}n$ vertices $v$ in $V$, $U'$ should contain at least $\frac{\epsilon}{4}$ fraction of the vertices in the cloud $\{v^1,v^2,\ldots,v^B\}$. Since otherwise, the cardinality of $U'$ is at most $\left(n-\frac{\epsilon n}{4}\right)\cdot \frac{\epsilon B}{4}+\frac{\epsilon n}{4}\cdot B<\frac{\epsilon nB}{2}$, a contradiction. 
By~\Cref{lem:edge}, we can deduce that there exists an edge $e \in E'$ all of whose vertices are in $U'$, which implies that the edge $e$ is in $E^{''}$ as well. 
This proves that in the soundness case, with probability at least $\frac{4}{5}$, there exists an edge in every $\epsilon$ measure subset of $V'$. 

This completes the proof of~\Cref{thm:simple}. Under the Unique Games Conjecture~\cite{Khot02}, the hardness of vertex cover in $t$-uniform hypergraphs can be improved to $t-\epsilon$. We remark that we can get the same hardness for simple hypergraphs by our reduction. 

\subsection{Set Cover on Simple Set Systems}

In the set cover problem, there is a set family $\mathcal{S}\subseteq 2^X$ on a universe $X=[n]$, and the goal is to cover the universe $[n]$ with as few sets from the family as possible. 
The greedy algorithm where we repeatedly pick the set that covers the maximum number of new elements achieves a $\ln n$-factor approximation algorithm for the problem, and this is known to be optimal. We consider the same problem under the restriction that the family $\mathcal{S}$ is a simple set system i.e. for every $i \neq j, |S_i \cap S_j | \leq 1$. 
Surprisingly, in contrast with the hardness result for vertex cover, simplicity of the set family helps in achieving better approximation factor for the set cover problem.
\begin{theorem}(Theorem~\ref{thm:simple-set-cover-intro} restated)
\label{thm:simple-set-cover}
    The greedy algorithm achieves a $\left(\frac{\ln n}{2}+1\right)$-approximation guarantee for the set cover problem on simple set systems over a universe of size $n$. Furthermore, the bound is essentially tight for the greedy algorithm---there is a simple set system on which the approximation factor of greedy exceeds $(\frac{\ln n}{2}- 1)$.
\end{theorem}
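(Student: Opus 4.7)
The plan for the upper bound is a charging analysis of the greedy algorithm. Fix any optimal cover $\{O_1,\ldots,O_k\}$ and run greedy, letting $c_t$ denote the number of newly covered elements at step $t$, so the greedy cost equals $\sum_{e\in[n]} 1/c_{t(e)}$ where $t(e)$ is the step at which $e$ is first covered. The standard Chv\'atal-style charging shows that for any $O_j$ and any subset $A\subseteq O_j$, the sum of charges restricted to $A$ is at most the $|A|$-th harmonic number $H_{|A|}$, obtained by bounding by the $|A|$ largest terms in $\{1, 1/2, \ldots, 1/|O_j|\}$. Fixing any assignment of each element to a containing OPT set yields a partition $\{A_j\}$ of $[n]$, giving
\[
g \;\le\; \sum_{j=1}^{k} H_{|A_j|} \;\le\; k\cdot H_{n/k}
\]
by Jensen's inequality on the concave function $H$. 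For $k\ge\sqrt n$, this already yields $g\le k\cdot H_{\sqrt n}\le k\bigl(\tfrac{\ln n}{2}+1\bigr)$.

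For $k<\sqrt n$ the Jensen bound is too weak, and I would use simplicity directly. The key structural observation is that any set $S$ not in the optimal cover satisfies $|S|=\sum_j|S\cap O_j|\le k$, since OPT covers $S$ and each OPT set shares at most one element with $S$ by simplicity. Hence non-OPT sets have size at most $k$, while the largest OPT set has size at least $n/k>\sqrt n\ge k$. Greedy therefore strictly prefers OPT sets as long as the uncovered universe has size exceeding $k^2$, and in particular picks at most $k$ OPT sets before entering a residual phase on a universe of size at most $k^2$. Applying the same Jensen bound on the residual instance contributes at most $k\cdot H_k\le k\bigl(\tfrac{\ln n}{2}+1\bigr)$ further picks, closing out the bound in this regime.

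For the lower bound, the plan is to construct an explicit simple set system on $n=q^2$ elements for a prime power $q$. Take the universe as a $q\times q$ grid with rows forming the optimal cover, so $\mathrm{OPT}=q=\sqrt n$. The adversarial family will consist of ``transversals''---sets containing exactly one element per row---with pairwise intersections at most $1$, realized as codewords of a Reed-Solomon (MDS) code over $\mathbb{F}_q$ of distance $q-1$. With appropriate tie-breaking, greedy is forced to select these transversals in sequence, covering $q-(i-1)$ new elements at step $i$ and requiring $\Theta(\sqrt n\log n)$ picks in total, while $\mathrm{OPT}=\sqrt n$ throughout. The hardest step will be pinning down the exact constant $\tfrac{\ln n}{2}+1$ in the upper bound: a naive combination of the generic Jensen bound with the simplicity-based size constraint on non-OPT sets leaves a lossy additive term of order $k$, and shaving this to the claimed constant requires careful bookkeeping of when greedy transitions from OPT-picks to residual non-OPT picks.
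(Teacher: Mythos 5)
Your upper bound is essentially sound. For $k\ge\sqrt n$ the Chv\'atal charging plus Jensen argument ($g\le k\,H_{n/k}\le k\,H_{\sqrt n}$) is a clean alternative to the paper's route, which instead splits greedy's run at the point where no set covers $\sqrt n$ new elements and invokes the $\ln|S_{\max}|$ guarantee on the residual instance; both work, and neither uses simplicity in this regime. For $k<\sqrt n$ your argument is essentially the paper's: non-OPT sets have size at most $k$ by simplicity, so any greedy step covering more than $k$ new elements must pick one of the at most $k$ large (hence OPT) sets, there are at most $k$ such steps, and afterwards every residual set has size at most $k$. The additive slack you flag (landing at $\tfrac{\ln n}{2}+2$ rather than $+1$) is real but minor --- the paper's own case-2 computation silently uses $k\ln k$ where the standard residual bound gives $k H_k$, so it carries the same order of looseness, and the theorem only claims the bound is ``essentially'' tight.

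The lower bound construction, however, does not work, and this is a genuine gap. If the adversarial family consists of full transversals of the $q\times q$ grid realized as Reed--Solomon lines $L_{a,b}=\{(x,ax+b):x\in\mathbb{F}_q\}$, then after greedy picks one line $T_1$ of slope $a_1$, every line parallel to $T_1$ is disjoint from it and still has residual size $q$, strictly larger than the residual size $q-1$ of every row and of every non-parallel line. Greedy is therefore \emph{forced} --- no tie-breaking needed --- to pick out the entire parallel class of $T_1$ and finishes in exactly $q=\mathrm{OPT}$ steps, for approximation ratio $1$. Even pruning the family to pairwise non-parallel lines does not help: step $i$ then covers about $q-i+1$ new elements, the $q$ lines together cover only about $q(q+1)/2$ elements, and greedy falls back on rows for the rest, giving a ratio of at most about $2$. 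The obstruction is structural: to force $\Theta(k\log k)$ greedy picks, the $i$-th adversarial set must cover only about $s_i/k$ new elements, where $s_i\approx k^2(1-1/k)^{i-1}$ is the residual universe, so the adversarial sets must have geometrically \emph{shrinking} sizes $\lceil s_i/k\rceil$ while still meeting each row at most once and remaining pairwise disjoint. This is exactly what the paper's construction does (take the top uncovered element of each of the $\lceil s_i/k\rceil$ fullest blocks), and the recursion $s_{i+1}=s_i-\lceil s_i/k\rceil$ then sustains $(k-1)\ln k$ picks. Fixed-size transversals cannot achieve this.
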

\begin{proof}
First, we prove the upper bound. Let the optimal solution size be equal to $k$ i.e. there is $\mathcal{T}=\{S_1, S_2, \ldots, S_k\} \subseteq \mathcal{S}$ such that the union of sets in $\mathcal{T}$ is equal to $[n]$. For every set $S \in \mathcal{S}\setminus \mathcal{T}$, $|S\cap S_i| \leq 1$ by the simplicity of the set system, and thus, we get that 
\begin{equation}
\label{eq:set-size-bound}
\forall S \in \mathcal{S}\setminus \mathcal{T}, |S|\leq k.
\end{equation}
We now consider two different cases: 
\begin{enumerate}
    \item Suppose that $k \geq \sqrt{n}$. We recall that the greedy algorithm in fact achieves a $\log |S_{max}|$-factor approximation algorithm for set cover on general instances where $|S_{max}|$ is the size of the largest set in the family. Thus, after the greedy algorithm picks $t$ sets each of which cover at least $\sqrt{n}$ new elements, in the remaining instance, we have $|S_{max}|\leq \sqrt{n}$. As there are $k$ sets that cover the remainining instance, the greedy algorithm picks at most $k \frac{\ln n}{2}$ sets after picking the $t$ sets. As each of the $t$ sets cover at least $\sqrt{n}$ new elements, $t\leq \sqrt{n}$. Overall, the total number of sets used by the greedy algorithm is equal to 
    \[
    t+k\frac{\ln n}{2} \leq \sqrt{n}+k\frac{\ln n}{2} \leq \left( 1+ \frac{\ln n}{2}\right) k
    \]
    \item Suppose that $k < \sqrt{n}$. In this case, using~(\ref{eq:set-size-bound}), we can infer that there are at most $k$ sets with size at least $k$ in the family. Thus, after the greedy algorithm picks $k$ sets, in the remaining instance, each set has size at most $k$, and thus, greedy algorithm picks at most $k \ln k$ sets. Overall, the total number of sets picked by the greedy algorithm is equal to \[
    k + k \ln k \leq k + k \frac{\ln n}{2} = \left( 1 + \frac{\ln n}{2}\right) k
    \]
\end{enumerate}
Thus, in both the cases, the greedy algorithm picks at most $k\left( 1+\frac{\ln n}{2}\right)$ sets. 

\medskip\noindent \textbf{A hard instance for the greedy algorithm.} We now prove that the above bound is tight for the greedy algorithm. Fix a large integer $k$, and let $n=k^2$, $X=[n]$. 
We first add $k$ sets to the family $\mathcal{S}$ $S_1, S_2, \ldots, S_k$ where $S_j = \{ (j-1)k+1, (j-1)k+2,\ldots, jk\}$.
Note that these $k$ sets together cover the whole universe $X$.
We view the universe $X$ as $k$ blocks, with the $j$'th block comprising of the set $S_j$. 

We now add $m=(k-1)\ln k$ additional pairwise disjoint sets $T_1, T_2, \ldots, T_m$ to $\mathcal{S}$ such that the greedy algorithm picks the set $T_i$ in the $i$'th iteration.  We choose the sets $T_i, i \in [m]$, as follows: 
\begin{enumerate}
    \item For $j \in [k]$, let the set $X_j$ be the uncovered elements of the block $S_j$. Let $a_j=|X_j|$ for all $j \in [k]$. We initially set $X_j = S_j$ for all $j \in [k]$. 
    \item At every iteration $i \in [m]$: 
    \begin{enumerate}
        \item Sort the elements $\{a_1, a_2, \ldots, a_k\}$ such that $a_{\alpha_1} \geq a_{\alpha_2}\geq \ldots \geq a_{\alpha_k}$ where $(\alpha_1, \alpha_2, \ldots, \alpha_k)$ is a permutation of $[k]$. Let $p=a_{\alpha_1} \leq k$ be the largest number of uncovered elements in a block. 
        \item Let $P=\{\alpha_1, \alpha_2, \ldots, \alpha_p\}$. For $l \in [p]$, let $u_l \in [n]$ be equal to the largest element in $X_{\alpha_l}$. 
        \[
            u_l = \max \{ b \mid b \in X_{\alpha_l}  \}
        \] 
        We set 
        \[
        T_i = \{ u_l \mid l \in [p]\}
        \]
        Furthermore, we set $X_{\alpha_l} = X_{\alpha_l}\setminus u_l$ for all $l \in [p]$. We also update $a_j, j \in [k]$ as $a_j = |X_j|$ for all $j \in [k]$. 
    \end{enumerate}
  \end{enumerate}
  
    In the above procedure to output the sets $T_i, i \in [m]$, the cardinality of $|T_i|\geq |T_{i+1}|$ for all $i$. Furthermore, in the $i$th iteration of the above procedure, the cardinality of $T_i$ is at least the number of elements in any block that are not covered yet. This ensures that the greedy algorithm in the $i$th iteration picks the set $T_i$.
    Furthermore, as the sets $T_i$s are all mutually disjoint, and intersect each block at most once, the resulting set system is indeed a simple set system. 
    
    Our goal is to prove that after all the $m$ sets are picked, there are still uncovered elements in $[n]$. 
    For an integer $i \in [m]$, we let $s_i \in [n]$ denote the number of elements not covered by the greedy algorithm before the set $T_i$ is picked.
    For $i \in [m]$, the size of the set $T_i$ picked by the greedy algorithm in the $i$th iteration is equal to the largest number of uncovered elements in a block i.e. the value of $a_{\alpha_1}$ in the $i$th iteration. Based on the updating procedure followed above, we can infer that this value is equal to $|T_i|=\left \lceil \frac{s_i}{k}\right\rceil $. This follows from the fact that at any iteration of the above procedure, the sorted values $a_{\alpha_1}$ and $a_{\alpha_k}$ satisfy $a_{\alpha_1}\leq a_{\alpha_k}+1$.
 
    We have $s_1 = n = k^2$, and 
    \[
    s_{i+1} = s_{i}-|T_i| = s_{i}-\left\lceil \frac{s_i}{k} \right\rceil \geq s_i \left( 1 - \frac{1}{k} \right) -1 
    \]
    By setting $t_i = s_i +k$ for $i \in [m]$, we get 
    \[
    t_{i+1} \geq t_i \left( 1 - \frac{1}{k} \right) 
    \]
    Thus, we get 
    \begin{align*}
    t_{m+1}&\geq t_1 \left( 1 - \frac{1}{k} \right)^m \\
    &= (k^2+k) \left( 1 - \frac{1}{k} \right)^{(k-1) \ln k}\\
    &\geq (k^2+k) \exp \left(-\frac{\frac{1}{k}}{1-\frac{1}{k}}(k-1) \ln k \right) \quad (\text{Using }1-x \geq e^{\frac{-x}{1-x}}\, \forall \,0 \leq x <1)\\
    &= k+1
    \end{align*}
    Thus, $s_{m+1} \geq 1$, which proves that there are elements that are not covered after the greedy algorithm uses $m=(k-1)\ln k$ sets. 
    This completes the proof that there are simple set systems on $n$ elements with $k=\sqrt{n}$ sets covering all the elements where as the greedy algorithm picks at least $(k-1)\ln k\geq k\left( \frac{\ln n}{2}-1 \right)$ sets. \qedhere

\end{proof}

Turning to the hardness of set cover on simple set systems, we note that the inapproximability result of~\Cref{thm:hypergraph-vc} holds for $t=(\ln n)^{c}$ for an absolute constant $c>0$, with a weaker completeness guarantee of a vertex cover of measure $\approx 2/t$. Combining it with our reduction in the proof of~\Cref{thm:simple}, and recalling that the dual of a simple hypergraph is also simple, we can deduce that set cover on simple set systems is NP-hard to approximate to a factor better than $(\ln n)^{\Omega(1)}$. It remains an interesting question to determine the exact approximability of set cover on simple set systems. 

\section*{Acknowledgment}
We thank Vincent Cohen-Addad for telling us about the Johnson Coverage Hypothesis and its connection to hardness of clustering and for sharing a copy of \cite{CKL20}.

\bibliography{ref}
\bibliographystyle{alpha}
\end{document}